\documentclass[12pt]{article}

\usepackage[letterpaper, margin=1in]{geometry}
\usepackage{graphicx}
\usepackage{natbib}
\usepackage{bm}
\usepackage{float}
\usepackage{siunitx}
\usepackage{multirow}
\usepackage{booktabs}
\usepackage{amsmath}
\usepackage{amssymb}
\usepackage{amsfonts}
\usepackage{amsthm}
\usepackage{caption}
\usepackage{hyperref}

\def\BibTeX{{\rm B\kern-.05em{\sc i\kern-.025em b}\kern-.08em
    T\kern-.1667em\lower.7ex\hbox{E}\kern-.125emX}}

\newtheorem{theorem}{Theorem}
\newtheorem{lemma}{Lemma}
\newtheorem{corollary}{Corollary}
\newtheorem{proposition}{Proposition}
\theoremstyle{remark}
\newtheorem{remark}{Remark}

\def\UnderWiggleTemp{\the\catcode`\@}
\catcode`\@=11
\ifx\UnderWiggle@Loaded\relax
  \catcode`\@=\UnderWiggleTemp
  \endinput \else \let\UnderWiggle@Loaded=\relax \fi
\newbox\U@BoxA
\newbox\U@BoxB
\newdimen\U@DimenA
\def\U@DoUnderWiggle{
  \offinterlineskip
  \vtop{
    \hbox{\vbox{\copy0}}
    \vskip 1.2pt
    \vbox to 0.4pt{
      \hbox to\wd0{\hss\char'176\hss}
      \vskip0pt minus 1fil
    }
    \vskip 0.4pt
  }
}
\def\UnderWiggle#1{{%
  \ifmmode
    \mathchoice
      {\setbox0=\hbox{$\displaystyle #1$}\U@DoUnderWiggle}
      {\setbox0=\hbox{$\textstyle #1$}\U@DoUnderWiggle}
      {\setbox0=\hbox{$\scriptstyle #1$}\U@DoUnderWiggle}
      {\setbox0=\hbox{$\scriptscriptstyle #1$}\U@DoUnderWiggle}
  \else
    \setbox0=\hbox{#1}\U@DoUnderWiggle
  \fi
}}
\catcode`\@=\UnderWiggleTemp

\newcommand{\be}{\begin{eqnarray}}
\newcommand{\uw}{\UnderWiggle}
\newcommand{\ee}{\end{eqnarray}}
\newcommand{\ba}{\begin{eqnarray*}}
\newcommand{\ea}{\end{eqnarray*}}

\title{Composite Wavelet Matrix--Based Transforms and Applications}

\author{
Radhika R. Kulkarni \and Brani Vidakovic\\
{\large Department of Statistics, Texas A\&M University, College Station, TX, USA}
}

\date{\today}

\begin{document}

\maketitle

\begin{abstract}
Orthogonal wavelet transforms are a cornerstone of modern signal and image denoising because they
combine multiscale representation, energy preservation, and perfect reconstruction. In this paper,
we show that these advantages can be retained and enhanced by moving beyond classical
single–basis wavelet filterbanks to a broader class of \emph{composite wavelet–like matrices}.
By combining orthogonal wavelet matrices through products, Kronecker products, and block–diagonal
constructions, we obtain new unitary transforms that generally fall outside the strict wavelet
filterbank class, yet remain fully invertible and numerically stable.

The central finding is that such composite transforms often induce stronger concentration of signal energy
into fewer coefficients than conventional wavelets. This increased sparsity, quantified using
Lorenz curve diagnostics, directly translates into improved denoising under identical
thresholding rules. Extensive simulations on Donoho--Johnstone benchmark signals, complex–valued
unitary examples, and adaptive block constructions demonstrate consistent reductions in
mean–squared error relative to single–basis transforms. Applications to atmospheric turbulence measurements and image denoising of the Barbara benchmark further
confirm that composite transforms better preserve salient structures while suppressing noise.

From an engineering perspective, the proposed framework is attractive because it requires no modification
to the standard transform–shrink–inverse pipeline and preserves orthogonality and perfect
reconstruction by construction. The results suggest a shift in viewpoint: wavelet transforms can be
treated as modular orthogonal operators that may be algebraically combined to better match signal
structure. Composite wavelet–like matrices therefore provide a practical and flexible extension of
classical wavelet methods for denoising, compression, and multiscale analysis in one– and
higher–dimensional settings.
\medskip

\noindent
{\bf Keywords:} Composite wavelet matrices, Kronecker products, Lorenz curves, Orthogonal transforms, Wavelet denoising.
\end{abstract}

\section{Introduction}

Wavelet-based transforms have become a cornerstone of modern signal processing,
statistical estimation, and data compression because of their ability to represent signals
efficiently across multiple scales.
In particular, orthogonal discrete wavelet transforms (DWTs) provide sparse
representations for a wide range of signals and enable stable denoising through simple
coefficient-wise shrinkage rules.
When combined with thresholding, orthogonal wavelet bases yield estimators that are
computationally efficient, invertible, and near-optimal for many classes of functions.

Despite these successes, a fundamental limitation of classical wavelet methodology
is its reliance on a \emph{single} wavelet basis.
The performance of wavelet denoising is strongly dependent on the choice of basis, a well-established fact in the denoising literature~\citep{Sahoo2024,Srivastava2016}, 
and no single orthogonal wavelet family is uniformly optimal across signals with
heterogeneous structure.
In practice, signals often exhibit a mixture of smooth components,
localized spikes, oscillatory textures, or intermittent bursts, each of which may be
best represented by different wavelet atoms.
Consequently, committing to a single wavelet basis can lead to oversmoothing in some
regions and insufficient noise suppression in others.

From an operator-theoretic perspective, the essential property enabling wavelet
denoising is not the filterbank construction itself, but rather \emph{orthogonality}
(or unitarity in the complex setting).
Orthogonality guarantees energy preservation, numerical stability, and perfect
reconstruction, which are precisely the properties required for threshold-based
shrinkage and inverse transformation.
This observation motivates a broader question: rather than restricting attention to
classical wavelet filterbanks, can one enlarge the class of admissible transforms by
considering general orthogonal operators constructed from wavelet matrices?

In this paper, we answer this question affirmatively by introducing
\emph{composite wavelet-like matrices}.
These are orthogonal (or unitary) matrices formed by algebraic operations on
orthogonal wavelet matrices, including matrix products, Kronecker products,
block-diagonal assemblies, and similarity transforms.
Although such composites may not correspond to standard wavelet filterbanks,
they retain the crucial properties of orthogonality and invertibility.
As linear operators, they can therefore be used directly in multiscale signal
representations, coefficient shrinkage, and reconstruction. Throughout, these operators act on finite-length discrete signals and images, i.e., vectors in $\mathbb{R}^N$ with $N=2^J$,  $J=\log_2(N)$ (and their two-dimensional analogs through the Kronecker construction).

The central contribution of this work is to demonstrate that composite wavelet-like
matrices can yield systematically improved sparsity and denoising performance
relative to single-wavelet bases.
To isolate the effect of the transform itself, we fix the thresholding rule across all
experiments and compare performance solely on the basis of the induced coefficient
representations.
Through extensive simulations using the Donoho--Johnstone benchmark signals,
we show that composite transforms often produce more disbalanced energy distributions
and lower mean-squared error (MSE) after shrinkage than any individual wavelet basis.
These improvements persist across different signal types and noise levels.

Beyond one-dimensional denoising, composite wavelet-like matrices offer a flexible
framework for higher-dimensional data, adaptive representations, and structured
operator design.
Kronecker constructions naturally extend to multidimensional signals and images,
while block-diagonal assemblies allow different wavelet structures to be applied
locally to heterogeneous signal segments.
Moreover, because the framework is purely algebraic, it aligns naturally with
tensor-product representations arising in modern computational architectures,
including quantum information processing.

The remainder of the paper is organized as follows.
Section~\ref{sec:II} introduces orthogonal wavelet matrices from a linear-operator viewpoint,
emphasizing their role as building blocks for more general constructions.
Section~\ref{sec:III} describes several algebraic operations that preserve orthogonality and
give rise to composite wavelet-like matrices.
Section~\ref{sec:IV} presents a comprehensive simulation study comparing composite and
single-basis transforms under identical shrinkage rules.
Section~\ref{sec:V} illustrates the advantages of composite transforms on real-world data.
Section~\ref{sec:VI} concludes with discussion and directions for future research. Appendix~\ref{sec:A} proves Proposition~\ref{prop:prod_not_wavelet},  Appendix~\ref{sec:B}  gives additional turbulence results, Appendix~\ref{sec:C}  proves Corollary~\ref{cor:universal}, Appendix~\ref{sec:D}  reports the noise-level and filter-pair sweeps, and Appendix~\ref{sec:E}  proves Proposition~\ref{prop:nonuniversality}.
%%%%%%%%%%%%%%%%%%%%%%%%%%%%%
% ORTHOGONAL WAVELET MATRICES
%%%%%%%%%%%%%%%%%%%%%%%%%%%%%
\section{Orthogonal Wavelet Matrices}
\label{sec:II}
\noindent
This section establishes the matrix-based formulation that underpins all composite constructions developed later.

Orthogonal wavelet matrices provide a concrete linear--algebraic representation of the
discrete wavelet transform (DWT). Instead of viewing the DWT exclusively as a recursive
filtering and downsampling algorithm, the matrix formulation expresses the transform as
an orthogonal change of basis in $\mathbb{R}^N$. This viewpoint makes the structural properties of the transform explicit---orthogonality, energy preservation, and perfect reconstruction---and  enables systematic manipulation and composition of wavelet transforms using standard matrix operations. These properties are central to the
composite constructions developed later in the paper.

Let the input signal $\uw y$ have length $N=2^J$, where $J=\log_2(N)$. An $L$--level
orthogonal wavelet transform maps $\uw y$ to a vector of scaling and wavelet coefficients
$\uw d$ via a matrix multiplication

\[
\uw d = \bm W \uw y,
\]
where $\bm W$ is an $N \times N$ orthogonal matrix. The transform produces wavelet detail
coefficients at levels $J-1, J-2, \dots, J-L$, together with scaling coefficients at level
$J-L$, where $L \le J$ is the number of decomposition levels.

\subsection{Single--level wavelet matrices}

Let $\uw h=\{h_s,\, s\in\bm Z\}$ be the low--pass wavelet filter, and let $n$ be a fixed
integer shift parameter. For each level $k=1,2,\dots$, define a matrix $H_k$ of size
$(2^{J-k} \times 2^{J-k+1})$ whose $(i,j)$ entry is
\be
\label{eq:modulo}
h_s, \qquad s = n + j - 2i \ \mbox{\it modulo} \ 2^{J-k+1}.
\ee
The modulo operation implements periodic boundary conditions.
Successive rows of $H_k$ are circular shifts by two samples,
reflecting filtering followed by decimation by two and permitting
an efficient matrix implementation.

By analogy, define the matrix $G_k$, of the same size as $H_k$, using the high--pass filter $\uw g$, where $g_i =
(-1)^i h_{n+1-i}$. The choice of $n$ determines the relative shift of the wavelet and
scaling functions on the time grid; for filters from the Daubechies family, a standard
choice is to take $n$ equal to the number of vanishing moments \citep[see][]{Daubechies1992ten,Mallat2009wavelet}.

The stacked matrix

\[
\left[ \begin{array}{c} H_k \\ G_k \end{array} \right]
\]
acts as a change of basis in $\mathbb{R}^{2^{J-k+1}}$ and is therefore unitary. Denote $^{\top}$ as the transpose. It follows
that

\[
I_{2^{J-k+1}} = [H_k^{\top} \; G_k^{\top}] \left[ \begin{array}{c} H_k \\ G_k \end{array} \right]
= H_k^{\top} H_k + G_k^{\top} G_k.
\]
and

\[
\left[ \begin{array}{c} H_k \\ G_k \end{array} \right]
[H_k^{\top} \; G_k^{\top}]
=
\left[ \begin{array}{cc}
H_k H_k^{\top} & H_k G_k^{\top} \\
G_k H_k^{\top} & G_k G_k^{\top}
\end{array} \right]
= I.
\]
Consequently,
\[
H_k H_k^{\top} = I, \qquad G_k G_k^{\top} = I, \qquad
H_k G_k^{\top} = G_k H_k^{\top} = 0,
\]
and
\[
H_k^{\top} H_k + G_k^{\top} G_k = I.
\]
These identities express orthogonality and perfect reconstruction at a single
decomposition level.

\subsection{Multilevel wavelet matrices}

The full orthogonal wavelet transform is obtained by cascading these single--level
orthogonal changes of basis across scales. Writing $\bm W_L$ for the $L$--level
wavelet $N\times N$ matrix, one obtains
\ba
\bm W_1 &=& \left[ \begin{array}{c} H_1 \\ G_1 \end{array} \right], \\
\bm W_2 &=& \left[ \begin{array}{c}
\left[ \begin{array}{c} H_2 \\ G_2 \end{array} \right] H_1 \\
G_1
\end{array} \right], \\
\bm W_3 &=& \left[ \begin{array}{c}
\left[ \begin{array}{c}
\left[ \begin{array}{c} H_3 \\ G_3 \end{array} \right] H_2 \\
G_2
\end{array} \right] H_1 \\
G_1
\end{array} \right], \dots
\ea
Proceeding in this way yields an orthogonal matrix $\bm W_L$ corresponding to an $L$--level
wavelet decomposition, producing detail coefficients at levels $J-1$ through $J-L$ and
scaling coefficients at level $J-L$. The orthogonality of $\bm W_L$ follows directly from the orthogonality of the individual blocks.

At this point, the filterbank interpretation becomes secondary; what matters for the
developments in subsequent sections is that $\bm W_L$ is an explicitly defined orthogonal
operator on $\mathbb{R}^N$.

Two consequences of orthogonality will be used throughout. First, $\bm W_L$ preserves energy: for any $\uw y \in \mathbb{R}^N$,
\[
\|\bm W_L \uw y\|^2
= \ \uw y^{\top} \bm W_L^{\top} \bm W_L \, \uw y
= \uw y^{\top} \uw y
= \|\uw y\|^2 .
\]
\noindent
Second, $\bm W_L$ is perfectly conditioned. That is, all its singular values equal
one, so $\kappa_2(\bm W_L) = 1$ and the transform neither amplifies nor
attenuates perturbations in the $\ell_2$ norm. The same two properties
hold for every orthogonal (or unitary) operator constructed in
Section~\ref{sec:III}, including the composite transforms, since they are
built to preserve orthogonality.

\subsection{Epsilon--decimated wavelet matrices}

We also consider the family of $\varepsilon$--decimated wavelet transforms
$\bm W_{\varepsilon}$. In an orthogonal DWT, each decomposition level involves a choice
between retaining even--indexed or odd--indexed coefficients after decimation by two.
Let

\[
\varepsilon = (\varepsilon_1, \varepsilon_2, \dots, \varepsilon_L),
\qquad \varepsilon_k \in \{0,1\}.
\]
Here, $\varepsilon_k = 0$ corresponds to retaining even--indexed coefficients at level
$k$, while $\varepsilon_k = 1$ similarly corresponds to retaining odd--indexed coefficients.
Operationally, this is implemented by modifying the index $s$ in
\eqref{eq:modulo} at each step $k$ by adding $\varepsilon_k$.

Each choice of $\varepsilon$ defines a distinct orthogonal wavelet matrix
$\bm W_{\varepsilon}$. Collectively, the $2^L$ such matrices represent all possible
phase--shifted versions of the DWT and form the basis of the stationary (non--decimated)
wavelet transform, as described by \citet{NasonSilverman1995}. In the present work, these
$\varepsilon$--decimated transforms are viewed as a natural library of orthogonal wavelet
matrices that can be combined through algebraic operations to form composite transforms. Other orthogonal libraries, such as wavelet-packet bases~\citep{CoifmanWickerhauser1992}, could equally serve as building blocks.

\subsection{Implementation and extensions}

Matrix implementations of orthogonal wavelet transforms
(\texttt{Wavmat.m|py}) and $\varepsilon$--decimated transforms
(\texttt{Wavmateps.m|py}) are available on GitHub at
\footnote{\url{https://github.com/rrkulkarni108/CompositeWavelets}}.
The explicit matrix representation facilitates experimentation with alternative
constructions and provides a direct computational bridge to the composite operators
studied later in the paper.

\medskip
\noindent
{\bf Comment.}
Wavelet matrices can also be constructed using Givens rotations
\citep{Givens1958}, which build orthogonal transformations through successive planar rotations. Although not pursued here, this approach offers another perspective on wavelet matrices as elements of the orthogonal group.

\medskip
\noindent
One may also consider unitary matrices constructed from complex-valued wavelet filters rather than purely real ones. Complex wavelet transforms preserve energy and perfect reconstruction while offering improved symmetry and phase information. An example involving the product $\bm W_1 \bm W_2$, with the complex-valued $\bm W_1$, is presented in Section~\ref{sec:3.1.1}.

\medskip
\noindent
Having established orthogonal wavelet transforms as explicit matrix operators, we now describe algebraic operations that preserve orthogonality and enable the construction of composite wavelet--like matrices.

\section{Combinations of Wavelet Matrices Preserving Orthogonality}
\label{sec:III}
% Bridge to Section 3
\noindent
Building on the explicit matrix formulation of orthogonal wavelet transforms introduced above, we now examine how such matrices can be combined while retaining orthogonality or unitarity.

\label{sec:sec3}

Section~\ref{sec:II} represents an $L$--level DWT on a signal of length $N=2^{J}$ as a multiplication by an $N\times N$ orthogonal (or unitary) matrix. This linear--operator viewpoint is the main device of the paper. Once a transform is expressed as a matrix, the design space expands: one can combine known wavelet matrices using standard matrix operations while retaining orthogonality (or unitarity). This is important for engineering applications because orthogonality guarantees energy preservation, numerical stability, and perfect reconstruction, so the usual pipeline (transform, shrink, inverse) remains valid even when the resulting operator is no longer a classical two--channel wavelet filterbank.

The most natural ways of combining wavelet matrices while preserving orthogonality or unitarity are the following:
taking matrix products, forming Kronecker products, constructing diagonal block matrices, and applying similarity transforms. 
In general, however, the resulting operator is \emph{not} a wavelet matrix in the strict filterbank sense --- i.e., it cannot be refactorized into a single two–channel polyphase form with the usual normalization and perfect reconstruction constraints. For products this is established in Proposition~\ref{prop:prod_not_wavelet} (proof in Appendix~\ref{sec:A}); the Kronecker and block–diagonal constructions fall outside the wavelet class by construction, as detailed in Sections~\ref{sec:sec3.2} and \ref{sec:sec3.3} respectively.
The main message is that one can and should consider a broader class of orthogonal ``wavelet--like'' operators when the goal is sparsity and denoising performance, rather than strict adherence to a single filterbank representation.

\subsection{Product of Wavelet Matrices}
\label{sec:sec3.1}

The product of two wavelet matrices preserves orthogonality but generally leaves the wavelet class.
Unless the composite transform can be refactorized back into the polyphase form of a wavelet filterbank
(with the normalization conditions satisfied), it is not itself a ``wavelet matrix'' in the strict sense.

Let

\[
\bm W = \bm W_1 \cdot \bm W_2
\]
for wavelet matrices $\bm W_1$ and $\bm W_2$ of the same size, where the subscripts now label two distinct wavelet matrices rather than a number of levels.
Then $\bm W$ is orthogonal/unitary, but in general need not be a wavelet matrix in the strict filterbank sense.

More generally, the product of multiple wavelet matrices can be written as

\[
\bm W = \prod_{i=1}^k \bm W_i
= \bm{W_1 \cdot W_2} \hspace{0.5em} \bm \cdots \hspace{0.5em} \bm{W_k},
\]
where each $\bm W_i$ is an orthogonal or unitary wavelet matrix of the same size.
This generalized product preserves orthogonality and energy, while offering greater flexibility to capture a variety of signal features, potentially enhancing sparsity and improving tasks such as compression or denoising. The sequential application of multiple transforms can provide a richer multi--scale representation than a single wavelet. This cascading of filtering stages is conceptually related to Mallat's scattering transform~\citep{BrunaMallat2013}, which also composes successive filters; the key difference is that the scattering transform applies a nonlinear modulus between stages, whereas our construction composes linear orthogonal operators and remains invertible.

As mentioned above, the resulting transform usually does not correspond to a single--stage classical
wavelet filter. The effective scaling function may be more complex or multiwavelet in nature, with the
standard normalization conditions $\sum h_n=\sqrt{2}$ and $\sum h_n^2=1$ generally not satisfied.
Specifically for the product of two wavelet matrices, we explored a possibility to collapse the composite
transform into a single filter pair by constructing a composite low--pass filter

\[
h = h_{w1} * h_{w2}^{\uparrow 2},
\]
where ${\uparrow \!\!2}$ is filter upsampling, and taking $g$ as its quadrature mirror filter. In the case where standard compactly supported orthonormal wavelet
families are used as product components, the half-band condition  required of a single orthonormal two-channel wavelet low-pass filter is not satisfied. Consequently, while $\bm W=\bm W_1 \bm W_2$ is an orthogonal and fully invertible transform, its atoms
cannot be generated by classical wavelet machinery; they must be read directly from the columns of
$\bm W^{\top}$ and form a broader class of orthonormal composite framelet elements that do not arise
from a single two--channel filter bank. Further discussion is given in Appendix~\ref{sec:appendix}.

\medskip

To make the above point explicit, we record the following proposition.

\medskip
\noindent
\begin{proposition}
\label{prop:prod_not_wavelet}
Let $\bm W_1$ and $\bm W_2$ be $N\times N$ orthogonal wavelet matrices constructed as in
Section~\ref{sec:II} from compactly supported orthonormal wavelet filters. Then $\bm W=\bm W_1 \bm W_2$  cannot be represented as a single two--channel perfect reconstruction wavelet filterbank;
that is, $\bm W$ is not a wavelet matrix in the strict polyphase sense.
\end{proposition}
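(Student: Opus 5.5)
We argue by contradiction at the level of the first decomposition stage, where a strict two--channel wavelet matrix is most rigid. Suppose $\bm W=\bm W_1\bm W_2$ were a wavelet matrix in the strict polyphase sense. Then there would be a single low--pass filter $h$, with high--pass $g_i=(-1)^i h_{n+1-i}$, satisfying the normalization $\sum_s h_s=\sqrt2$ and the half--band (double--shift orthogonality) condition $\sum_s h_s h_{s+2m}=\delta_{m0}$. Moreover, the rows of $\bm W$ in the finest detail block would be circular shifts by two of one fixed vector $g$. Structurally, the finest--scale detail block of $\bm W$ would have to equal $G_1$ for that filter, and the remaining rows would have to be a wavelet cascade applied to $H_1$. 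Our plan is to compute the finest--scale detail rows of $\bm W_1\bm W_2$ directly and show they cannot have this form.

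The key computational step is to write the detail block explicitly:
\[
\bigl[\bm W_1\bm W_2\bigr]_{\mathrm{det}}=G^{(1)}_1\,\bm W_2 .
\]
A row of this block is the vector $(G^{(1)}_1)_{i,\cdot}\bm W_2$. Its entries are the inner products of a shifted copy of $g^{(1)}$ with the columns of $\bm W_2$, which are the atoms of $\bm W_2$ at every scale.

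First we show that these rows are not shift--by--two copies of one another. The columns of $\bm W_2$ are not equivariant under a circular shift by two: coarse atoms shift by $2^k$, not by $2$. A circular shift by two of $g^{(1)}$ therefore does not produce a shifted row. To make this concrete, we consider the $z$--transform, where the product corresponds to a convolution with upsampled filters, $h=h_{w1}*h_{w2}^{\uparrow2}$, as suggested in the text. We then verify that the putative composite filter violates the half--band identity. Writing $P(z)=H(z)H(z^{-1})$ with $H(z)=H_1(z)H_2(z^2)$, we have
\[
P(z)+P(-z)=P_2(z^2)\bigl[P_1(z)+P_1(-z)\bigr]=2P_2(z^2).
\]
This equals $2$ only if $P_2\equiv1$, that is, only if $h_{w2}$ is a trivial (single--tap) filter. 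A compactly supported orthonormal wavelet filter with $\sum h=\sqrt2$ has at least two nonzero taps, so $P_2\not\equiv1$, and we obtain a contradiction. Note that the identity $P_1(z)+P_1(-z)=2$ holds because $h_{w1}$ is itself orthonormal.

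The main obstacle is justifying that \emph{any} two--channel representation of $\bm W$ must arise from this specific composite filter. In other words, we must rule out a different filter that happens to reproduce $\bm W$. We would handle this by a uniqueness argument. A strict wavelet matrix determines its filter from its first row of $G_1$, read modulo periodization, and this requires $N$ to be larger than the combined filter support so that wrap--around does not alias the taps. We then compare with the computed row of $G^{(1)}_1\bm W_2$ and check that consecutive rows fail the shift--by--two relation. If this general argument becomes delicate, the fallback is to assume $N$ is large relative to the support and to exhibit the failure via the Haar$\times$Haar or D4 case explicitly.
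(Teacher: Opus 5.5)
Your $z$-domain step is exactly the computation in the paper's Appendix~A, $\Delta_A(\omega)=C_1|H_2(e^{2i\omega})|^2$. As you yourself flag, it does not prove the proposition: it shows only that one particular candidate filter is not half-band, whereas the claim is that \emph{no} filter generates $\bm W$. (The paper's sketch stops at the same point and does not address this either.) The deferred uniqueness step is therefore the whole content of the statement, and the half-band computation contributes nothing toward it. By the noble identity, $H_1(z)H_2(z^2)$ is the equivalent filter of the cascade (filter by $h_1$, decimate, filter by $h_2$, decimate). That is a decimate-by-$4$ channel: the filter is orthonormal to its shifts by $4$ and is never half-band for nontrivial $h_2$. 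A filter of exactly this form already occurs, as the equivalent filter of the coarse block $H_2H_1$, inside every ordinary two-level wavelet matrix. So its failing the half-band test cannot distinguish $\bm W$ from a genuine wavelet matrix. Your fallback of checking Haar$\times$Haar or D4 would only settle particular instances.

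The repair is the shift-by-two test you mention last, and it finishes the proof on its own, with no filter identification and no $z$-transform. Let $S$ be the circular shift. In any strict wavelet matrix (including the $\varepsilon$-decimated ones) the last $N/2$ rows form a $G_1$ block, so consecutive rows satisfy $\mathrm{row}_{i+1}=S^2\,\mathrm{row}_i$. For $\bm W$ these rows are $r_i\bm W_2=\sum_j (r_i)_j\,\xi_j$. Here $r_i=S^{2i}r_0$ are the rows of $G^{(1)}_1$, and $\xi_j$ are the \emph{rows} of $\bm W_2$; these, not its columns, are its atoms. The finest-scale atoms, occupying the last $N/2$ indices, are $S^{2k}\psi$, so one step in coefficient index is a time shift of $2$. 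Choose $i$ with the supports of $r_i$ and $r_{i+1}$ inside that index block. Substituting $j=t+2i$ gives $r_i\bm W_2=S^{4i+c}\phi$ and $r_{i+1}\bm W_2=S^{4}(r_i\bm W_2)$ for a fixed integer $c$, where $\phi=\sum_t g^{(1)}_{n+t}S^{2t}\psi$ is a unit vector. The strict structure would force $S^2\phi=\phi$, so $\phi$ would be constant on even and on odd indices. But $\phi$ is supported on an arc of length at most $2(m_1-1)+m_2$, with $m_1,m_2$ the filter lengths. Hence it vanishes at two consecutive positions, which gives $\phi=0$, a contradiction. This needs $N$ large relative to the filter lengths, a hypothesis the statement leaves implicit (matching the paper's exclusion of degenerate cases). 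It also makes your intuition precise: interior rows of $\bm W$ step by $4$, not $2$.
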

\medskip
\noindent
A sketch of the proof, based on the collapse attempt $h=h_{w1}*h_{w2}^{\uparrow 2}$ and the failure of the resulting filter to satisfy the half-band (quadrature mirror) condition for standard compactly supported orthonormal families, is given in
Appendix~\ref{sec:A}. 
\medskip

\noindent 
So far, the orthogonality of $\bm W$ has been used to establish energy preservation, numerical stability, and perfect reconstruction, which are all properties of the transform and reconstructed signal. For denoising, orthogonality has a further consequence that we now make explicit, and which concerns the \emph{noise} rather than the signal: it leaves the noise i.i.d. Gaussian, so the universal threshold still applies. This holds for every composite operator of this section despite, as Proposition~\ref{prop:prod_not_wavelet} states, such operators not needing to be wavelet matrices. 

For denoising, write the noisy observation as
$\uw y = \uw s + \uw\varepsilon$, so that the transform coefficients
\[
\uw d = \bm W \uw y
      = \underbrace{\bm W \uw s}_{=\,\bm\theta} + \bm W \uw\varepsilon
\]
split into the clean--signal coefficients $\bm\theta = \bm W \uw s$ and a
noise term $\bm W \uw\varepsilon$. The following lemma characterizes the
latter.

\begin{lemma}[Noise invariance under orthogonal transforms]
\label{lem:noise}

\noindent
Let $\varepsilon \sim \mathcal{N}(0,\sigma^2 I_N)$ and let $\bm W$ be any
$N \times N$ orthogonal matrix. Then
$\bm W \varepsilon \sim \mathcal{N}(0,\sigma^2 I_N)$.
\end{lemma}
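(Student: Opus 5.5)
The plan is to use the fact that a linear image of a Gaussian vector is Gaussian, and then compute its first two moments. Since $\varepsilon \sim \mathcal{N}(0,\sigma^2 I_N)$, for any fixed matrix $\bm W$ the vector $\bm W\varepsilon$ is again multivariate Gaussian. It is therefore determined by its mean and covariance.

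The cleanest route to Gaussianity is through characteristic functions. For $t\in\mathbb{R}^N$,
\[
\mathbb{E}\exp\bigl(i\, t^{\top}\bm W\varepsilon\bigr)
=\mathbb{E}\exp\bigl(i\,(\bm W^{\top}t)^{\top}\varepsilon\bigr)
=\exp\Bigl(-\tfrac{\sigma^2}{2}\,t^{\top}\bm W\bm W^{\top}t\Bigr).
\]
This identifies $\bm W\varepsilon$ as $\mathcal{N}(0,\sigma^2\bm W\bm W^{\top})$.

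The moments can also be read off directly. By linearity of expectation, $\mathbb{E}[\bm W\varepsilon]=\bm W\,\mathbb{E}[\varepsilon]=0$. The covariance is $\mathrm{Cov}(\bm W\varepsilon)=\bm W\,\mathrm{Cov}(\varepsilon)\,\bm W^{\top}=\sigma^2\bm W\bm W^{\top}$.

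The key step is to invoke orthogonality, $\bm W\bm W^{\top}=I_N$. This holds for every square orthogonal matrix, since $\bm W^{\top}\bm W=I$ forces $\bm W^{\top}=\bm W^{-1}$. It applies in particular to the wavelet matrices of Section~\ref{sec:II} and to the composite operators of Section~\ref{sec:III}. With it, the covariance reduces to $\sigma^2 I_N$, which gives the claim.

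I would add a remark that a diagonal covariance for jointly Gaussian coordinates means the transformed coordinates are independent, hence i.i.d. $\mathcal{N}(0,\sigma^2)$. This is the property that justifies the universal threshold.

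There is no real obstacle here. The only point needing care is the unitary case: there one should use $\bm W^{*}$ and state the result for circularly symmetric complex noise. Since the lemma is stated for real orthogonal $\bm W$, I would restrict to the real case and note that the complex analogue follows identically from $\bm W\bm W^{*}=I$.
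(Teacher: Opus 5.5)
Your proposal is correct and follows the paper's proof: the linear image $\bm W\varepsilon$ is Gaussian, its mean is $\bm W\,\mathbb{E}[\varepsilon]=0$, and its covariance is $\sigma^2\bm W\bm W^{\top}=\sigma^2 I_N$ by orthogonality. The characteristic-function computation you add justifies the Gaussianity step more explicitly than the paper, which simply asserts it, but the argument is the same.
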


\begin{proof}
Any linear combination of a Gaussian vector is Gaussian, so $\bm W \varepsilon$ is Gaussian. It has mean $\bm W\,\mathbb{E}[\varepsilon] = 0$, and its
covariance is
\[
\mathbb{E}\big[(\bm W\varepsilon)(\bm W\varepsilon)^{\top}\big]
 = \bm W\,(\sigma^2 I_N)\,\bm W^{\top}
 = \sigma^2\, \bm W \bm W^{\top}
 = \sigma^2 I_N,
\]
with the last equality by orthogonality. Therefore, $\bm W\varepsilon$ and $\varepsilon$ have the same distribution.
\end{proof}

\medskip
\noindent
This then leads us to ask: beyond preserving the noise model, why might we be interested in such a matrix as $\bm W$?   
Depending on the number of levels of decomposition, $\bm W$ can
outperform a single--basis wavelet matrix in several scenarios. The key reason is
its ability to enhance \textit{disbalance} among wavelet coefficients, that is, to concentrate the energy of the
signal into just a few coefficients, more effectively than single--basis wavelet transforms. We can see
this visually using Lorenz curves \citep{lorenz1905}.

\noindent
The Lorenz curve provides a compact and intuitive way to visualize how energy is
distributed among wavelet coefficients. Let $d=(d_1,\dots,d_N)$ denote the wavelet
transform of a signal, and define the coefficient energies as

\[
x_i = |d_i|^2 , \qquad i=1,\dots,N,
\]
with $x_i \ge 0$. After sorting the energies in nondecreasing order,
$x_{(1)} \le x_{(2)} \le \cdots \le x_{(N)}$, the Lorenz curve is defined by
\begin{eqnarray}
L_N(p)
=
\frac{\sum_{i=1}^{\lfloor Np \rfloor} x_{(i)}}{\sum_{i=1}^{N} x_i},
\qquad 0 \le p \le 1.
\label{eq:lorenz}
\end{eqnarray}
The Lorenz curve plots the cumulative fraction of total signal energy captured by
the smallest fraction $p$ of wavelet coefficients. If the energy is evenly spread 
across the coefficients, then $L_N(p)\approx p$ and the curve lies close to the
$45$--degree line, indicating little or no compressibility. In contrast, when the
energy is concentrated in a relatively small number of coefficients, the smallest
$\lfloor Np\rfloor$ coefficients contribute very little energy for most values of
$p$, so that $L_N(p)\ll p$ over a wide range. The resulting curve (see explanatory graphic below) lies far below the
diagonal and rises sharply only near $p=1$.
According to the criterion of \citet{goel1995}, such coefficient disbalance is
preferable for signal representation and shrinkage-based denoising.

While the Lorenz curve provides a visual description of this
concentration, it is often convenient to summarize it numerically.
We therefore define the top-$k$ energy concentration by
\begin{equation}
C_k(d)
=
\frac{\displaystyle\sum_{j=1}^{k}|d|_{[j]}^2}
{\displaystyle\sum_{j=1}^{N}|d_j|^2},
\end{equation}
where $d=(d_1,\ldots,d_N)^\top$ is a nonzero coefficient vector and
$|d|_{[1]}\ge\cdots\ge |d|_{[N]}$
are the ordered coefficient magnitudes.
Larger values of $C_k(d)$ indicate greater $k$-term
compressibility. Moreover,
\begin{equation}
\label{eq:Ck}
C_k(d)
=
1-L_N\!\left(1-\frac{k}{N}\right),
\end{equation}

so $C_k$ and the Lorenz curve contain the same information. For denoising, what matters is not only how much energy the $k$ largest
coefficients retain but how much is lost when the remaining $N-k$ are
set to zero, since thresholding does exactly that. This loss is the
best-$k$-term approximation error of nonlinear approximation theory
\citep{DeVore1998,Mallat2009wavelet},
\begin{equation}
\varepsilon_k^2(d)
:=\min_{\|u\|_0\le k}\|d-u\|_2^2
=\sum_{j>k}|d|_{[j]}^2,
\qquad k=0,\dots,N,
\label{eq:best_k}
\end{equation}
where the minimum over all vectors with at most $k$ nonzero entries is
attained by keeping exactly the $k$ largest coefficients. Since the $k$ largest and the remaining $N-k$ coefficients partition the
total energy, \eqref{eq:best_k} and \eqref{eq:Ck} give
\begin{equation}
\varepsilon_k^2(d)=\|d\|_2^2\,\big(1-C_k(d)\big),
\label{eq:Ck_bestk}
\end{equation}
so $C_k$ is the normalized complement of a standard measure of
$k$-term compressibility. The three
descriptions are equivalent: $L_N$ as a curve, $C_k$ as the energy
retained, and $\varepsilon_k$ as the energy discarded at each $k$. It is
$\varepsilon_k$ that enters the risk bound of
Section~\ref{sec:theory}.

Consistent with this analysis, the numerical results reported in Section~\ref{sec:IV} show that composite product
transforms $\bm W_1 \bm W_2$ exhibit stronger coefficient disbalance and lower average MSE than single--basis wavelet transforms on the Donoho--Johnstone test signals.

\noindent \textbf{Scope of the sparsity improvement.}
The orthogonality of a composite transform guarantees energy preservation,
numerical stability, and perfect reconstruction, but it does not guarantee
greater coefficient concentration. Writing
\[
\uw d_1 = \bm W_2 \uw y,
\qquad
\uw d_2 = \bm W_1 \uw d_1 = \bm W_1 \bm W_2 \uw y ,
\]
the product improves top-$k$ compressibility relative to the inner transform
$\bm W_2$ only when $C_k(\uw d_2) > C_k(\uw d_1)$, and it improves upon both
component transforms only when
\[
C_k(\bm W_1 \bm W_2 \uw y)
\;>\;
\max\bigl\{\, C_k(\bm W_1 \uw y),\; C_k(\bm W_2 \uw y) \,\bigr\}.
\]
These conditions depend on the signal, on the order of composition, and on the
selected decomposition levels. Proposition~\ref{prop:nonuniversality} in
Appendix~\ref{sec:E} shows that strict improvement cannot hold universally, and
Fig.~\ref{fig:fig11} in Appendix~\ref{sec:D} characterizes empirically when it
does. The results reported below therefore demonstrate improvement for the
particular signals and transform configurations considered.

\begin{center}
\includegraphics[width=0.50\columnwidth]{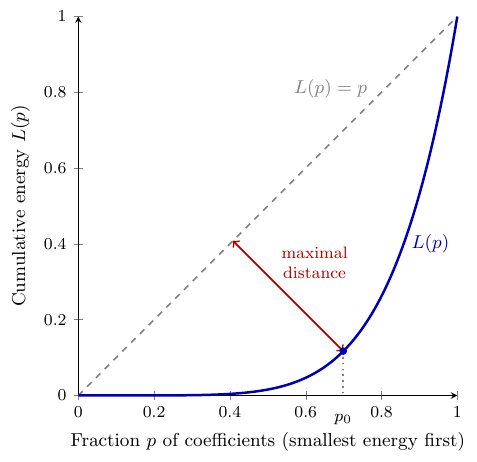}\\[4pt]
\end{center}
\noindent{\footnotesize\textbf{Explanatory Lorenz curve. The dashed diagonal $L(p)=p$ corresponds to evenly spread energy; a disbalanced representation lies farther from it, rising sharply only near $p=1$. The vertical drop at $p_0$ marks the location where the gap is largest, and the maximal (perpendicular) distance between the curve and the diagonal measures the concentration. }}

\subsubsection{\texorpdfstring{Product $\bm{W}_1\bm{W}_2$ with $\bm{W}_1$ being complex-valued}{Product W1W2 with W1 being complex-valued}}
\label{sec:3.1.1}

Let $y$ be the observed signal. Define two unitary wavelet transformation matrices: $\bm W_1$ generated
by a complex-valued filter (for example, complex Daubechies~6), and $\bm W_2$ generated by a real orthonormal
filter (such as Haar or real Daubechies). By a complex-valued filter we mean a wavelet filter whose taps are complex numbers; the resulting transform matrix is complex-valued and unitary rather than real orthogonal. The composite transform

\[
\bm W = \bm W_1 \cdot \bm W_2
\]
is unitary since it is a product of unitary matrices. We bring $y$ to the wavelet domain:

\[
\underset{\sim}{d} = \bm W y,
\]
where the complex-valued coefficients $d_k$ can be written as

\[
d_k = d_{\mathrm{Re},k} + i\, d_{\mathrm{Im},k},
\]
with magnitudes $|d_k|$.
To extract the true signal from the noisy signal, we apply a hard thresholding rule with universal threshold based on the magnitude before returning to the time domain. 
The denoising performance of product transforms incorporating complex-valued filters, with $\bm W_1$ being the complex DAUB~6 and
$\bm W_2$ being Haar, is evaluated
in Section~\ref{sec:IV} for the Doppler signal (see Table~\ref{tab:table1} and Fig.~\ref{fig:fig3}).

\subsection{Kronecker Product of Wavelet Matrices}
\label{sec:sec3.2}

The Kronecker product of two orthogonal (or unitary) wavelet matrices is itself orthogonal (unitary).
Indeed, if $\bm A$ and $\bm B$ are orthogonal, then
\begin{eqnarray*}
(\bm A \otimes \bm B)^\top (\bm A \otimes \bm B)
&=& (\bm A^\top \bm A) \otimes (\bm B^\top \bm B) \\
&=& \bm I \otimes \bm I
\;=\; \bm I,
\end{eqnarray*}
so the Kronecker product preserves energy and guarantees perfect reconstruction.
This stability makes Kronecker--structured transforms particularly attractive in wavelet shrinkage.
They naturally extend one--dimensional orthogonal wavelets to higher dimensions (for example, images or
multidimensional signals) while retaining sparsity--inducing properties.

As a concrete example, let $\bm A \in \mathbb{R}^{m \times n}$ denote a 2--D signal or image, and let
$\bm W_1 \in \mathbb{R}^{m \times m}$ and $\bm W_2 \in \mathbb{R}^{n \times n}$ be orthogonal wavelet
matrices. Then the 2--D scale--mixing wavelet transform is defined by
\begin{eqnarray}
\bm W_1 \bm A \bm W_2^\top,
\end{eqnarray}
and admits the vectorized identity
\begin{eqnarray}
\mathrm{vec}\!\left(\bm W_1 \bm A \bm W_2^\top\right)
&=& (\bm W_2 \otimes \bm W_1)\, \mathrm{vec}(\bm A),
\label{eq:kronecker_vec}
\end{eqnarray}
where $\mathrm{vec}(\cdot)$ denotes columnwise vectorization.
Thus, a 2--D transform can be represented as a Kronecker product transform applied to the vectorized
2--D object. Although orthogonal, this Kronecker (scale–mixing) construction is generally not a wavelet matrix in the strict filterbank sense; it is the separable orthogonal operator described above, built from two wavelet systems.

The block structure of Kronecker products also facilitates scalable shrinkage strategies: thresholding
can be applied along rows, columns, or subblocks, enabling both separable and blockwise shrinkage rules.
In quantum implementations, Kronecker structure aligns naturally with tensor--product Hilbert spaces,
allowing efficient circuit factorizations and making multidimensional shrinkage pipelines both
mathematically elegant and practically beneficial.

\subsection{Diagonal Block Wavelet Matrix}
\label{sec:sec3.3}

For wavelet matrices $\bm W_1$ and $\bm W_2$, the block matrix
\begin{eqnarray*}
\bm W = \mbox{diag}(\bm W_1,\bm W_2)
=
\left[\begin{array}{cc}
\bm W_1 & 0 \\
0 & \bm W_2
\end{array}\right]
\end{eqnarray*}
is orthogonal/unitary. This property gives an advantage for representing inhomogeneous signals.
As an illustrative example, consider the share price of a particular stock over a one year horizon. The stock price may
experience periods of smooth, stable behavior interspersed with sharp spikes and drops during
episodes of market volatility. In this setting, a single wavelet basis may not be sufficiently flexible to capture both regimes; it may oversmooth spikes and not remove enough noise from times of stability.
We therefore turn to a more customizable and adaptive method using the diagonal block wavelet matrix to investigate whether it improves performance relative to a single wavelet basis for signals with locally varying structures.

To reflect a heterogeneous signal, we consider a diagonal block wavelet matrix
\[
\bm W = \mathrm{diag}(\bm W_1,\bm W_2,...,\bm W_k),
\]
where each of the $k$ blocks corresponds to a wavelet basis suited to a particular structural component, such as oscillations, spikes, or discontinuous areas. This construction preserves orthogonality while allowing for different representations across signal segments. Like the Kronecker construction, the block–diagonal operator is generally not a wavelet matrix: applying distinct filters on segments that are disjoint breaks the shift–invariant (circulant) structure that a single two–channel filterbank requires across the block boundary, so no single filter generates the whole block-diagonal transform discussed here. Each diagonal block is itself an orthogonal multiscale transform applied to an entire segment, so the construction is segmentwise multiscale rather than pointwise. Its effectiveness depends on the segmentation: when the block boundaries can be placed at the signal's structural transitions, the construction limits the mixing of distinct structures across segments, whereas a boundary falling within a homogeneous region splits it across two bases.

A numerical evaluation of the diagonal block construction on a composite signal formed from the
four Donoho--Johnstone test functions is presented in Section~\ref{sec:IV} (Fig.~\ref{fig:fig4},
Table~\ref{tab:table2}).

\subsection{Similarity Transforms 
\texorpdfstring{$W^{\top} A W$}{WT A W}  when A is a wavelet Matrix}
\label{sec:sec3.4}

Given an orthogonal wavelet matrix $\bm W$ and a square matrix $A$, 
the similarity transform produces
\[
\bm B = \bm W^{\top} A \bm W.
\]
The similarity transform expresses the linear map $A$ in a new basis defined by $\bm W$, 
preserving properties such as eigenvalues and orthogonality (or unitarity, in which case the similarity transform produces
\(
\bm B = \bm W^{\dagger} A \bm W,
\)
where $\bm W^{\dagger}$ denotes the conjugate transpose).

When $A$ is a wavelet matrix, applying a similarity transform maintains orthogonality, 
but unlike a simple product $A \bm W$, it generally mixes scales and locations, so the resulting 
transform $\bm B$ may not retain the standard hierarchical structure of wavelet coefficients. 
Although similarity transforms preserve energy, they often lead to suboptimal performance 
for tasks such as thresholding or shrinkage, where sparsity and interpretability of 
wavelet coefficients are crucial. In our experiments, similarity–transformed wavelets 
were somewhat inferior to pairwise products of wavelet matrices and typically produced 
higher mean–squared error (MSE) after shrinkage compared to the original $ \bm A$, 
indicating that orthogonality alone does not guarantee improved alignment for sparsity 
\citep{goel1995}.

\subsection{Denoising Guarantees for Composite Transforms}
\label{sec:theory}
 
The four constructions above share a single foundation: each is
orthogonal by construction. Consequently, the denoising
guarantees of classical wavelet thresholding extend to all of them
uniformly in a real-valued setting. We make this explicit here. Recall from Section~\ref{sec:sec3.1}
that given $\uw y = \uw s + \uw\varepsilon$, the composite coefficients
split as $\uw d = \bm W \uw y = \bm\theta + \bm W\uw\varepsilon$ with
$\bm\theta = \bm W\uw s$, and  Lemma~\ref{lem:noise} characterizes the
noise term $\bm W\uw\varepsilon$.

\begin{corollary}[Validity of the universal threshold]
\label{cor:universal}
Let $\bm W$ be orthogonal and let $\lambda = \sigma\sqrt{2\log N}$. Then
\[
\Pr\!\Big(\max_{1\le i\le N} |(\bm W\uw\varepsilon)_i| > \lambda \Big)
\;\le\; \frac{1}{\sqrt{\pi \log N}} \;\xrightarrow[N\to\infty]{}\; 0 .
\]
Hence the universal threshold removes pure noise with probability tending
to one in any orthonormal basis. 
\begin{proof}
Follows from Lemma~\ref{lem:noise} and a Gaussian tail bound; see
Appendix~\ref{sec:C}.
\end{proof}
\end{corollary}

Corollary~\ref{cor:universal}  shows that the threshold removes noise. We now bound the error of the denoised signal itself. The
following bound then follows from the classical theory of thresholding.

\begin{corollary}[Oracle inequality]
\label{cor:oracle}
Let $\bm W$ be orthogonal and $\bm\theta = \bm W\uw s$. Let
$\uw d^{\,*}$ denote the hard thresholding of the coefficients
$\uw d = \bm W\uw y$ at $\lambda = \sigma\sqrt{2\log N}$, and let
$\hat{\uw s} = \bm W^\top \uw d^{\,*}$ be the reconstructed estimate. Then
\[
\mathbb{E}\,\|\hat{\uw s} - \uw s\|_2^2
\;\le\; \Lambda_N\Big(\sigma^2
       + \sum_{i=1}^{N} \min(\theta_i^2,\sigma^2)\Big),
 \Lambda_N \sim 2\log N .
\]
\end{corollary}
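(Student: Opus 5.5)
The plan is to reduce Corollary~\ref{cor:oracle} to the classical Donoho--Johnstone oracle inequality for hard thresholding in the Gaussian sequence model, using orthogonality twice. First, by orthogonality of $\bm W$, $\|\hat{\uw s}-\uw s\|_2^2=\|\bm W^\top(\uw d^{\,*}-\bm\theta)\|_2^2=\|\uw d^{\,*}-\bm\theta\|_2^2$, so the risk in the signal domain equals the risk in the coefficient domain. Second, by Lemma~\ref{lem:noise}, $\uw d=\bm\theta+\bm z$ with $\bm z=\bm W\uw\varepsilon\sim\mathcal N(0,\sigma^2 I_N)$. Hence $d_i=\theta_i+z_i$ with $z_i$ i.i.d.\ $\mathcal N(0,\sigma^2)$. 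This is exactly the sequence model, regardless of whether $\bm W$ is a product, Kronecker, block-diagonal or similarity composite. The risk then decomposes coordinatewise as $\sum_i \mathbb E\,(\eta_H(d_i,\lambda)-\theta_i)^2$, where $\eta_H$ is hard thresholding.

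The core step is a one-dimensional bound: for $x\sim\mathcal N(\theta,\sigma^2)$ and $\lambda=\sigma\sqrt{2\log N}$,
\[
r_H(\lambda,\theta):=\mathbb E(\eta_H(x,\lambda)-\theta)^2\le \Lambda_N\big(\sigma^2/N+\min(\theta^2,\sigma^2)\big).
\]
By scaling I take $\sigma=1$. I split into two regimes.

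\emph{Large $|\theta|$.} The hard-threshold risk is at most $1+\lambda^2$ uniformly. This follows by writing the error as $z$ on $\{|x|>\lambda\}$ and $-\theta$ on $\{|x|\le\lambda\}$, then bounding $\theta^2\Pr(|\theta+z|\le\lambda)$.

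\emph{Small $|\theta|$.} I will show $r_H(\lambda,\theta)\le r_H(\lambda,0)+c\,\theta^2\lambda^2$, with $r_H(\lambda,0)=2\int_\lambda^\infty z^2\phi(z)\,dz=2(\lambda\phi(\lambda)+\tilde\Phi(\lambda))\lesssim \lambda\phi(\lambda)$. At $\lambda=\sqrt{2\log N}$ this is of order $\sqrt{\log N}/N$.

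Summing over $i$ gives the displayed bound with $\Lambda_N$ a constant multiple of $\lambda^2+1$, i.e.\ $\Lambda_N\sim 2\log N$ up to the factor coming from hard rather than soft thresholding. I would state this via the known result, Donoho--Johnstone 1994 and the related treatment in Mallat's book, rather than rederive the exact constant.

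The main obstacle is the $\theta$-dependent part of the small-$|\theta|$ case for hard thresholding. Unlike soft thresholding, the hard-threshold risk function is not monotone in $|\theta|$ and has a bump near $|\theta|\approx\lambda$. This makes a clean constant $2\log N+1$ delicate. I would first try the crude bound $r_H\le\min(\theta^2,\,1+\lambda^2)+\Pr(|z|>\lambda)(1+\lambda^2)$-type estimates, which suffice for $\Lambda_N=O(\log N)$. If the sharp asymptotic constant is needed, I would instead cite the refined bound for hard thresholding with $\lambda$ slightly adjusted, which the paper's ``$\sim$'' permits. The only genuinely new content relative to the classical theorem is the two orthogonality reductions in the first paragraph.
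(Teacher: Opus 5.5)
Your proposal is correct and matches the paper's proof: orthogonality gives $\|\hat{\uw s}-\uw s\|_2^2=\|\uw d^{\,*}-\bm\theta\|_2^2$, Lemma~\ref{lem:noise} reduces the problem to the sequence model $d_i=\theta_i+z_i$ with i.i.d.\ $\mathcal N(0,\sigma^2)$ noise, and the bound then comes from the classical hard-thresholding oracle inequality, which the paper cites as \citet[Theorem~4]{DonohoJohnstone1994}. One caution on your fallback: the statement fixes $\lambda=\sigma\sqrt{2\log N}$ (the ``$\sim$'' refers to $\Lambda_N$, not $\lambda$), so adjusting the threshold would change the estimator; this is also unnecessary, because your uniform bound $r_H\le 1+\lambda^2$ already absorbs the bump near $|\theta|\approx\lambda$, and for $|\theta|\le 1$ (with $\sigma=1$) a second-order expansion gives $r_H(\lambda,\theta)\le r_H(\lambda,0)+(1+o(1))\theta^2$, so $\Lambda_N=2\log N+1$ works for large $N$ at exactly this threshold.
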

\begin{proof}
By Lemma~\ref{lem:noise}, the noise coefficients
$z_i = (\bm W\uw\varepsilon)_i$ are i.i.d.\ $\mathcal{N}(0,\sigma^2)$, so
the observed coefficients satisfy $d_i = \theta_i + z_i$ with
$\bm\theta = \bm W\uw s$. Now due to orthogonality of $\bm W$, we have preservation of lengths $\|\hat{\uw s}-\uw s\|_2^2 = \|\uw d^{\,*}-\bm\theta\|_2^2$ (Parseval's identity). The estimation
of $\bm\theta$ from $d_i = \theta_i + z_i$ by hard thresholding is exactly the setting of \citet[Theorem~4]{DonohoJohnstone1994}, which gives the
stated oracle risk bound.
\end{proof}

Note that the bound holds for any orthogonal $\bm W$  with the same constant $\Lambda_N$. The only term that changes
from one transform to another is the oracle risk  from Corollary~\ref{cor:oracle} :

 \begin{equation}
R(\bm\theta;\sigma):=\sum_{i=1}^{N}\min\!\big(\theta_i^2,\sigma^2\big),
\qquad \bm\theta=\bm W\uw s .
\label{eq:oracle_term}
\end{equation}

Let $E:=\|\uw s\|_2^2$. Since $\bm W$ is orthogonal, Parseval's
identity gives $\|\bm\theta\|_2^2=\|\uw s\|_2^2=E$ regardless of which
$\bm W$ is used; this is what allows $C_k$ and $\varepsilon_k$ of two
different transforms to be compared. The following identity expresses \eqref{eq:oracle_term} in terms of $\varepsilon_k$ and $C_k$.

\begin{proposition}[Oracle risk as a best-$k$-term trade-off]
\label{prop:identity}
For any $\bm\theta$ with $\|\bm\theta\|_2^2=E$ and any $\sigma>0$,
\begin{equation}
\begin{split}
R(\bm\theta;\sigma)
=\min_{0\le k\le N}\big[k\sigma^2+\varepsilon_k^2(\bm\theta)\big] \\
=\min_{0\le k\le N}\big[k\sigma^2+\big(1-C_k(\bm\theta)\big)E\big],
\end{split}
\label{eq:identity}
\end{equation}
with $C_0:=0$. The minimum is attained at
$k^\ast=\#\{i:\theta_i^2>\sigma^2\}$.
\end{proposition}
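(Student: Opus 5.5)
The plan is to sort the coefficients by magnitude and compare the sum $\sum_i \min(\theta_i^2,\sigma^2)$ with the function $f(k):=k\sigma^2+\varepsilon_k^2(\bm\theta)$ for $k=0,\dots,N$. Write $a_j:=|\theta|_{[j]}^2$, so $a_1\ge a_2\ge\cdots\ge a_N\ge 0$. By \eqref{eq:best_k}, $\varepsilon_k^2(\bm\theta)=\sum_{j>k}a_j$, and hence
\[
f(k)=\sum_{j=1}^{k}\sigma^2+\sum_{j=k+1}^{N}a_j .
\]
Because $\min(\cdot,\sigma^2)$ acts coordinatewise, $R(\bm\theta;\sigma)=\sum_{j=1}^N\min(a_j,\sigma^2)$ is invariant under reordering. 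The first equality in \eqref{eq:identity} therefore reduces to a statement about the sorted sequence $(a_j)$.

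First I will show the lower bound $f(k)\ge R$ for every $k$. In the expression for $f(k)$, each index $j\le k$ contributes $\sigma^2\ge\min(a_j,\sigma^2)$, and each index $j>k$ contributes $a_j\ge\min(a_j,\sigma^2)$. Summing over $j$ gives $f(k)\ge R(\bm\theta;\sigma)$.

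Next I will show equality at $k^\ast=\#\{i:\theta_i^2>\sigma^2\}$. Since the $a_j$ are nonincreasing, $a_j>\sigma^2$ holds exactly for $j\le k^\ast$, and $a_j\le\sigma^2$ holds for $j>k^\ast$. So every term in $f(k^\ast)$ equals $\min(a_j,\sigma^2)$, giving $f(k^\ast)=R$. Together with the lower bound, this proves $R=\min_k f(k)$ and shows the minimum is attained at $k^\ast$. Ties $a_j=\sigma^2$ cause no difficulty: they fall into the $j>k^\ast$ group and contribute $a_j=\sigma^2$ either way.

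Finally, the second equality follows from substituting \eqref{eq:Ck_bestk}, $\varepsilon_k^2(\bm\theta)=\|\bm\theta\|_2^2(1-C_k(\bm\theta))=(1-C_k(\bm\theta))E$, for each $k\ge1$. For $k=0$, the convention $C_0:=0$ gives $\varepsilon_0^2=\|\bm\theta\|^2=E$, consistent with \eqref{eq:best_k}. One point needs care here: $C_k$ was defined only for a nonzero vector. If $E=0$, then every term is zero and the identity is trivial, so I will treat that case separately.

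I expect no real obstacle in this argument. The only step needing care is the bookkeeping linking the sorted order to the threshold set, so that the coordinatewise $\min$ matches the split at $k^\ast$; this is exactly where monotonicity of the sorted sequence is used.
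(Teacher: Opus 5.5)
Your proof is correct and follows the paper's argument essentially verbatim: you bound each coordinate of $R$ termwise to get $R(\bm\theta;\sigma)\le k\sigma^2+\varepsilon_k^2(\bm\theta)$ for every $k$, use the sorted order to get coordinatewise equality at $k^\ast$, and obtain the second form from \eqref{eq:Ck_bestk}. Your explicit handling of ties at $\sigma^2$ and of the degenerate case $E=0$ (where $C_k$ is undefined) adds a little care that the paper's proof leaves implicit.
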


\begin{proof}
Fix arbitrary $k$ between $0$ and $N$. Bounding the $k$ terms of \eqref{eq:oracle_term} with the largest
$|\theta_i|$ by
$\sigma^2$ and the remaining $N-k$ terms by $\theta_i^2$ gives
\[
R(\bm\theta;\sigma)\;\le\;k\sigma^2+\sum_{j>k}|\theta|_{[j]}^2
\;=\;k\sigma^2+\varepsilon_k^2(\bm\theta).
\]
Minimizing over $k$ yields ``$\le$'' in \eqref{eq:identity}. At $k=k^\ast$
both bounds hold with equality coordinate-wise, since
$\min(\theta_i^2,\sigma^2)=\sigma^2$ for exactly the $k^\ast$ largest
coefficients and $\min(\theta_i^2,\sigma^2)=\theta_i^2$ for the rest; hence equality. The second
form of \eqref{eq:identity} is \eqref{eq:Ck_bestk}.
\end{proof}

Identity \eqref{eq:identity} is the bias--variance trade-off of nonlinear
approximation \citep[Ch.~9]{Mallat2009wavelet}: the oracle pays $\sigma^2$
of variance for each coefficient it retains and $\varepsilon_k^2$ of bias
for those it discards, and the transform enters only through
$\varepsilon_k$ (which is equivalently $C_k$ and equivalently the Lorenz curve). It
also shows why concentration at small $k$ is what matters for denoising:
the minimizer $k^\ast$ is the number of coefficients above the noise
level, which is small for a sparse representation, so only the low-$k$
portion of the Lorenz curve is active in \eqref{eq:identity}.

Identity \eqref{eq:identity} turns the comparison of two transforms into
a comparison of their $\varepsilon_k$ at every $k$, and hence of their
Lorenz curves. The following theorem makes this exact.

\begin{theorem}[Lorenz dominance and oracle-risk ordering]
\label{thm:lorenz_oracle}
Let $\bm W_a$ and $\bm W_b$ be orthogonal, and let
$\bm\theta^a=\bm W_a\uw s$, $\bm\theta^b=\bm W_b\uw s$, with Lorenz
curves $L_N^a$, $L_N^b$. The following are equivalent:
\begin{enumerate}
\item[(i)] $L_N^a(p)\le L_N^b(p)$ for all $p\in[0,1]$;
\item[(ii)] $C_k(\bm\theta^a)\ge C_k(\bm\theta^b)$ for every
      $k=1,\dots,N$;
\item[(iii)] $\varepsilon_k^2(\bm\theta^a)\le\varepsilon_k^2(\bm\theta^b)$
      for every $k=0,\dots,N$;
\item[(iv)] $R(\bm\theta^a;\sigma)\le R(\bm\theta^b;\sigma)$ for every
      $\sigma>0$.
\end{enumerate}
Under any of these conditions the bound of Corollary~\ref{cor:oracle}
for $\bm W_a$ is no larger than that for $\bm W_b$ at every noise level.
\end{theorem}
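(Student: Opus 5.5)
The plan is to chain the four conditions through identities already established in the paper. The one nontrivial step, (iv)$\Rightarrow$(iii), I will handle with a Legendre-type inversion of the identity in Proposition~\ref{prop:identity}. Throughout I use that both coefficient vectors have the same total energy: by Parseval, $\|\bm\theta^a\|_2^2=\|\bm\theta^b\|_2^2=\|\uw s\|_2^2=E$. (If $E=0$, all four conditions hold trivially, so assume $E>0$.)

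\textbf{(i)$\Leftrightarrow$(ii).} By definition \eqref{eq:lorenz}, $L_N(p)$ depends on $p$ only through $m=\lfloor Np\rfloor\in\{0,\dots,N\}$. It equals $0$ at $m=0$ and $1$ at $m=N$ for both transforms. So (i) is equivalent to $L_N^a(m/N)\le L_N^b(m/N)$ for $m=1,\dots,N-1$. Writing $m=N-k$, identity \eqref{eq:Ck} converts this into $C_k(\bm\theta^a)\ge C_k(\bm\theta^b)$ for $k=1,\dots,N-1$. The remaining case $k=N$ holds with equality, since $C_N=1$ for both.

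\textbf{(ii)$\Leftrightarrow$(iii).} This follows from \eqref{eq:Ck_bestk}, $\varepsilon_k^2=E(1-C_k)$, with the same $E$ on both sides. The case $k=0$ is automatic because $\varepsilon_0^2=E$ for both.

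\textbf{(iii)$\Rightarrow$(iv).} This is immediate from Proposition~\ref{prop:identity}. For each fixed $\sigma$, every affine term $k\sigma^2+\varepsilon_k^2(\bm\theta^a)$ is at most the corresponding term for $\bm\theta^b$, so the minima over $k$ are ordered the same way.

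\textbf{(iv)$\Rightarrow$(iii): the main obstacle.} Put $t=\sigma^2$. By Proposition~\ref{prop:identity}, $R$ is the lower envelope $R(t)=\min_k\,[kt+\varepsilon_k^2]$. I will show that each $\varepsilon_k^2$ can be recovered from $R$ via
\[
\varepsilon_k^2(\bm\theta)=\sup_{t>0}\bigl[R(\bm\theta;\sqrt t)-kt\bigr].
\]
Given this formula, (iv) implies $R^a(t)-kt\le R^b(t)-kt$ for every $t>0$, and taking suprema yields (iii).

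The inequality ``$\le$'' in the formula is immediate, since $R(t)\le kt+\varepsilon_k^2$ for every $t$. For ``$\ge$'' I use that the minimizer of Proposition~\ref{prop:identity} is $k^\ast(t)=\#\{i:\theta_i^2>t\}$, and split into two cases.
\begin{itemize}
\item[(a)] If $|\theta|_{[k+1]}>0$ (for $k<N$), choose $t\in\bigl[|\theta|_{[k+1]}^2,\,|\theta|_{[k]}^2\bigr)$, with the convention $|\theta|_{[0]}=\infty$. This interval is nonempty whenever $|\theta|_{[k]}>|\theta|_{[k+1]}$. Then $k^\ast(t)=k$, so $R(t)=kt+\varepsilon_k^2$ exactly. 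If instead $|\theta|_{[k]}=|\theta|_{[k+1]}$, take $t=|\theta|_{[k+1]}^2$ directly. Then $R(t)=k't+\varepsilon_{k'}^2$ for the smaller index $k'=k^\ast(t)$, and ties give $k't+\varepsilon_{k'}^2=kt+\varepsilon_k^2$, because each tied coefficient contributes exactly $t$ to either side.
\item[(b)] If $|\theta|_{[k+1]}=0$ (including $k=N$), then $\varepsilon_k^2=0$. In this case I let $t\downarrow0$ and check that $R(t)-kt\to0$: at most $k$ coefficients are nonzero, and $R(t)\le\min(k,N)\,t$.
\end{itemize}
The care in this step goes into the tie and zero cases; the generic case is a direct reading of Proposition~\ref{prop:identity}.

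Finally, the closing statement of the theorem follows because the bound in Corollary~\ref{cor:oracle} equals $\Lambda_N(\sigma^2+R(\bm\theta;\sigma))$. The factor $\Lambda_N$ is the same for both transforms, and the bound is increasing in $R$, so (iv) orders the two bounds at every $\sigma$.
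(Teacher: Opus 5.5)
Your proof is correct and is essentially the paper's. The paper closes the cycle with (iv)$\Rightarrow$(ii) by rewriting $\min(\theta_i^2,c)=\theta_i^2-(\theta_i^2-c)_+$ and evaluating (iv) at the single level $c=|\theta^a|_{[k]}^2$, where $R(\bm\theta^a;\sqrt{c})=kc+\varepsilon_k^2(\bm\theta^a)$ exactly and $R(\bm\theta^b;\sqrt{c})\le kc+\varepsilon_k^2(\bm\theta^b)$ always holds; these two facts are precisely the content of your conjugacy formula $\varepsilon_k^2(\bm\theta)=\sup_{t>0}\bigl[R(\bm\theta;\sqrt t)-kt\bigr]$. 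The paper uses the right endpoint $|\theta^a|_{[k]}^2$ of the interval on which $R$ coincides with its $k$-th affine piece, rather than your left endpoint $|\theta|_{[k+1]}^2$; this removes your tie and $t\downarrow 0$ cases and leaves only the trivial case $|\theta^a|_{[k]}=0$.
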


\begin{proof}
(i)$\Leftrightarrow$(ii):  By \eqref{eq:Ck}, at the grid points
$p=m/N$,\( L_N(m/N)=1-C_{N-m}.\)
Since $L_N$ is constant on each interval
$[m/N,(m+1)/N)$, it is determined by its values at these grid points.
Thus, pointwise ordering of the Lorenz curves is equivalent to ordering
of every $C_k$.
(ii)$\Leftrightarrow$(iii): by \eqref{eq:Ck_bestk}, since both
transforms have total energy $E$.
(iii)$\Rightarrow$(iv): by Proposition~\ref{prop:identity}, each
candidate $k\sigma^2+\varepsilon_k^2(\bm\theta^a)$ in the minimum is at
most the corresponding candidate for $\bm\theta^b$, so the minimum is
smaller.
(iv)$\Rightarrow$(ii): write $c=\sigma^2$ and $t =\theta_i^2$. Since $\min(t,c)=t-(t-c)_+$
and both energy vectors sum to $E$, (iv) is equivalent to
\begin{equation}
\sum_{i=1}^N\big((\theta_i^a)^2-c\big)_+
\;\ge\;\sum_{i=1}^N\big((\theta_i^b)^2-c\big)_+
\qquad\text{for all }c>0 .
\label{eq:hinge}
\end{equation}
Fix $k$ and take $c=|\theta^a|_{[k]}^2$. If $c>0$, the left side of
\eqref{eq:hinge} equals $\sum_{j\le k}|\theta^a|_{[j]}^2-kc$ exactly,
since the remaining terms are non-positive and clipped to zero, while
the right side is at least $\sum_{j\le k}|\theta^b|_{[j]}^2-kc$, since
clipping can only increase a sum. Hence
$\sum_{j\le k}|\theta^a|_{[j]}^2\ge\sum_{j\le k}|\theta^b|_{[j]}^2$,
which is (ii). If $c=0$ then $C_k(\bm\theta^a)=1$ and (ii) holds
trivially.
\end{proof}

\begin{remark}
Condition (ii) states that the energy vector $((\theta_i^a)^2)_i$
majorizes $((\theta_i^b)^2)_i$ in the sense of
Hardy--Littlewood--P\'olya \citep{HardyLittlewoodPolya1934}, and (iv) is
the statement that $\bm x\mapsto\sum_i\min(x_i,\sigma^2)$ is
Schur-concave; the theorem is thus an instance of majorization theory
\citep[Ch.~3--4]{MarshallOlkinArnold2011}. The proof above is included to
keep the paper self-contained.
\end{remark}

\begin{remark}[Clean and noisy coefficients]
Theorem~\ref{thm:lorenz_oracle} concerns the clean coefficients
$\bm\theta=\bm W\uw s$. For noisy coefficients
$\uw d=\bm\theta+\bm W\uw\varepsilon$, Lemma~\ref{lem:noise} gives
$\mathbb E\,d_i^2=\theta_i^2+\sigma^2$ in every orthogonal basis, so at
the level of expected energies the noise adds the same floor to every
coefficient and preserves the ordering (ii). This does not imply that
(i) holds for each noise realization. Accordingly, the clean block of
Table~\ref{tab:sparsity_metrics} is the quantity to which the theorem
applies, and the noisy Lorenz curves of Fig.~\ref{fig:fig1} are
empirical diagnostics of the coefficients the threshold acts on.
\end{remark}

The Lorenz curve is a visualization of the oracle term, and
Theorem~\ref{thm:lorenz_oracle}  is the precise statement: a curve far
below the diagonal means most coefficients hold little energy, so most of
the terms $\min(\theta_i^2,\sigma^2)$ are small and the oracle risk is low. The
Lorenz curves in Fig.~\ref{fig:fig1} therefore show
the term that controls the risk bound, which is why transforms with
stronger disbalance can be expected to give lower MSE under the same threshold. The bound does not by itself establish that composite transforms outperform single--basis transforms; it shows how they can, through a reduction in oracle risk. 

\noindent
 Additionally, Proposition~\ref{prop:nonuniversality}
 shows that this reduction cannot hold for every signal.  Corollary~\ref{cor:universal}  depends only on the orthogonality of $\bm W$
and not on $\uw s$, so the universal threshold remains valid whether or not the
signal is sparse. What sparsity affects is instead the oracle term of
Corollary~\ref{cor:oracle}, which can be large when $\bm\theta$ is dense. Whether $\bm\theta=\bm W\uw s$ is dense depends on both the signal and the
transform. Thus, composite and single-basis transforms differ in risk only where one of
them represents the signal more sparsely than the other. The simulations of Section~\ref{sec:IV}  show that this improvement is
realized in the cases studied; the theorem orders the risk bounds and does
not by itself guarantee the ordering of the realized MSE.

%%%%%%%%%%%%%%%%%%%%%%%%%%%%%%%%%%%%%%%%%%%%%%%%%%%%
% SIMULATIONS AND EXAMPLES
%%%%%%%%%%%%%%%%%%%%%%%%%%%%%%%%%%%%%%%%%%%%%%%%%%%%
\section{Simulation Analysis of Performance of Wavelet-like Matrices in Denoising}
\label{sec:IV}
% Bridge from theory to experiments
\noindent
Having established the structural properties of composite wavelet-like matrices, we now assess their practical impact on denoising performance through controlled simulation studies.

\label{sec:sec4}

In this section, we quantify the denoising performance of several composite wavelet--like
orthogonal operators introduced in Section~\ref{sec:sec3}.
Throughout, the processing
pipeline is held fixed: we transform the noisy signal by an orthogonal/unitary matrix,
apply coefficient shrinkage using the universal threshold, and reconstruct by using the
inverse transform ($\bm W^\top$ in the real orthogonal case and
$\bm W^\dagger$ in the complex unitary case). This isolates the effect of the transform itself.
We performed $M = 200$ independent Monte Carlo replications and used mean--squared error (MSE) as the
primary performance metric. The formula for average MSE (AMSE) is as follows:  \begin{eqnarray}
\mathrm{AMSE} = \frac{1}{M N}\sum_{j=1}^M \sum_{i=1}^N \big(s_{\text{test},i} - \hat{s}_{i,j}\big)^2,
\label{eq:amse}
\end{eqnarray}
where
$\hat{s}_{i,j}$ is an estimate of the $i$th component of the signal at the $j$th Monte Carlo iteration, and $s_{\text{test},i}$ is the $i$th component of the signal. The composite transforms examined below are the product
transform $\bm W_1 \bm W_2$, the Kronecker product $\bm W_k \otimes \bm W_l$, and the
similarity transform $\bm W_1^{\top} \bm W_2 \bm W_1$, each compared against single--basis
wavelet transforms.

\medskip
\noindent
\textbf{Simulation framework.}
All simulation experiments are conducted using the four canonical Donoho--Johnstone test signals
(Doppler, Blocks, HeaviSine, and Bumps), representing oscillatory, discontinuous,  and locally spiked
features. We fix the signal length at $N=1024$ throughout. Unless otherwise stated, a decomposition level of $L=3$ is used for length $N = 1024$ signals because it gives strong coefficient concentration and near-minimal denoising MSE for Doppler, Blocks, and HeaviSine, with no further benefit from deeper decomposition; $L=1$ is used for the localized Bumps signal, whose performance is best at a finer level.

\noindent
In each replication, Gaussian white noise $\varepsilon \sim \mathcal{N}(0,I_N)$ is added
to a rescaled clean signal  $x$ so that the noise standard deviation is $\sigma=1$ and is known by construction. The rescaling of the signal is chosen to achieve a prescribed signal-to-noise ratio, which, following standard engineering convention, we define
as a ratio of powers,
\begin{eqnarray}
P_{\mathrm{signal}} = \frac{1}{N}\sum_{i=1}^{N} x_i^2 = \frac{\|x\|_2^2}{N},
\qquad
P_{\mathrm{noise}} = \sigma^2,
\label{eq:snr_power}
\end{eqnarray}  and report it in decibels as
\begin{eqnarray}
\mathrm{SNR}_{\mathrm{in}}
= 10\log_{10}\!\left(\frac{P_{\mathrm{signal}}}{P_{\mathrm{noise}}}\right).
\label{eq:snr_db}
\end{eqnarray} Since $\mathbb{E}\|\varepsilon\|_2^2 = N\sigma^2$, the power ratio in
\eqref{eq:snr_db} coincides with the corresponding energy ratio. Explicitly, for a
target $\mathrm{SNR}_{\mathrm{in}}$ and a raw signal $x_0$ we set
\[
x = \left(\frac{\sigma^2\,10^{\,\mathrm{SNR}_{\mathrm{in}}/10}}
               {\tfrac{1}{N}\|x_0\|_2^2}\right)^{1/2} x_0 ,
\]
so that \eqref{eq:snr_db} holds by construction.  We set $\mathrm{SNR}_{\mathrm{in}}=5$~dB in the main simulations (unless otherwise stated), a
moderate noise level at which the effect of the transform on denoising
performance is more visible. 

 Because $\sigma$ is known, we use $\lambda = \sigma\sqrt{2\log N}$ directly, which isolates the effect of the transform from that of variance estimation. When $\sigma$ is unknown in practice, a standard robust wavelet estimate is $\hat\sigma = \mathrm{median_k}(|d_{\mathrm{fine,k}}|)/0.6745$, using approximately zero-centered finest-scale detail coefficients,
and the resulting threshold can be applied unchanged in the composite domain.

Throughout our simulations and applications, we employ universal thresholding as a single, fixed shrinkage rule to isolate the effect of the transform itself and enable transparent comparisons across bases and composite constructions. Because the proposed transforms remain orthogonal/unitary, the same analysis pipeline applies without modification to alternative shrinkage schemes.

\medskip
\noindent
\textbf{Overview of simulation cases.}
We consider three classes of experiments: (i) real--valued product transforms evaluated across the
four Donoho--Johnstone signals; (ii) product and related composite transforms incorporating complex-valued
filters for the Doppler signal; and (iii) adaptive diagonal block constructions combining multiple
wavelet bases for the four signals.

\begin{figure}[!t]
  \centering
  \includegraphics[width=0.7\columnwidth]{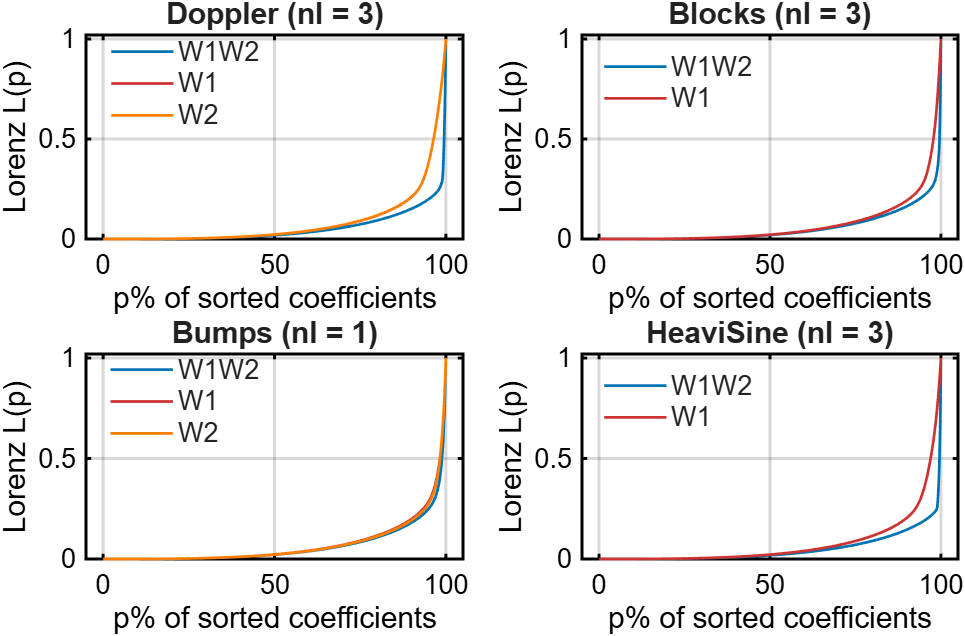}
  \caption{Lorenz Curves for four Donoho--Johnstone test signals  at $\mathrm{SNR}_{\mathrm{in}}=5$~dB. Product matrix $\bm W_1 \bm W_2$
  yields a substantially more disbalanced energy distribution than single--basis wavelet transforms.
  The number of levels of decomposition is $3$ for three of the four signals, with one level of
  decomposition showing clearest disbalance for the Bumps signal. For HeaviSine and Blocks signals, $\bm W_1=\bm W_2$ so only $\bm W_1$ is shown.}
  \label{fig:fig1}
\end{figure}

Fig.~\ref{fig:fig1} summarizes coefficient--energy concentration via Lorenz curves. Recall from the discussion of Lorenz curve (\ref{eq:lorenz}),
that a stronger departure from the $45$--degree line indicates greater
disbalance (energy concentration), a criterion linked to improved denoising efficacy under
thresholding \citep{goel1995}. Across the Donoho--Johnstone benchmark signals, the product transform
$\bm W_1 \bm W_2$ consistently produced more concentrated energy distributions than the components as single--basis
transforms, supporting the premise that composite orthogonal operators can yield sparser coefficient representations.

\begin{figure}[!t]
  \centering
  \includegraphics[width=0.7\columnwidth]{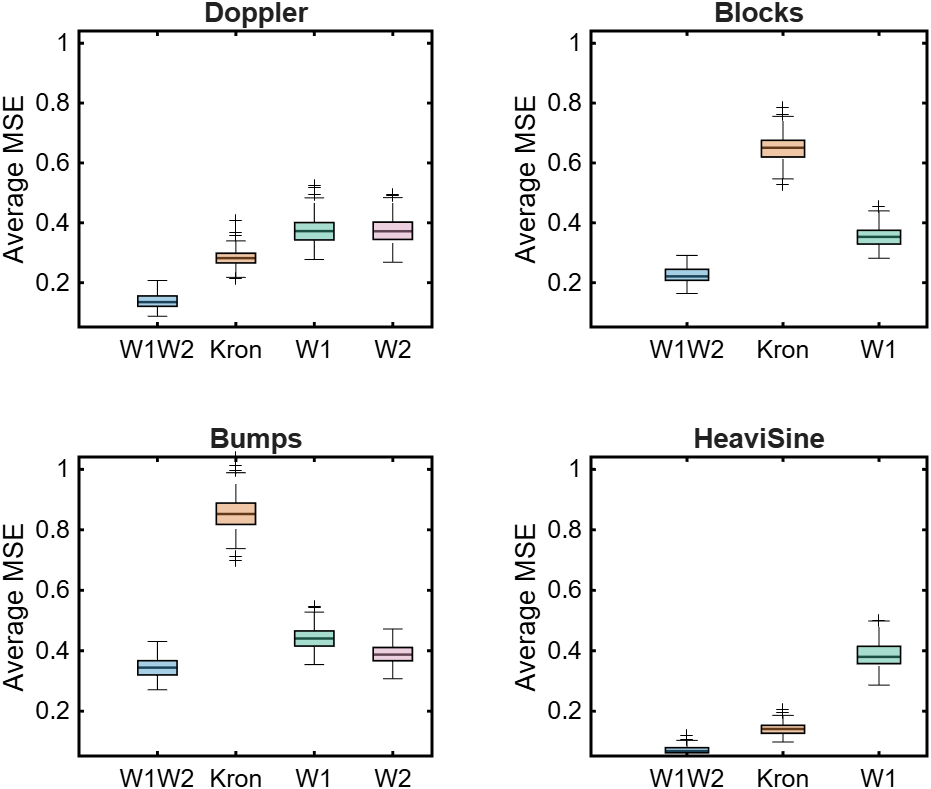}
  \caption{Product transform matrices $\bm W_1 \bm W_2$ (chosen respectively for each signal) exhibit
  lower AMSE over $200$ simulations for Donoho--Johnstone signals compared to single--basis transforms,  at
$\mathrm{SNR}_{\mathrm{in}}=5$~dB. Kronecker product performs similarly or better than single--basis transforms on two of the four signals.}
  \label{fig:fig2}
\end{figure}

Fig.~\ref{fig:fig2} reports AMSE summaries over the same $200$ replications. The product transforms
$\bm W_1 \bm W_2$ achieve the lowest AMSE in all four Donoho--Johnstone benchmark signals, with the
Kronecker construction competitive in selected cases. These results align with the Lorenz curve
evidence: transforms that induce stronger energy concentration in the coefficient domain tend to
yield better denoising under identical thresholding. To assess robustness to noise intensity, we repeated these one-dimensional
experiments over $\mathrm{SNR}_{\mathrm{in}}\in\{-5,0,5,8.45\}$~dB, spanning noise-dominated to signal-dominated regimes under the power definition ~\eqref{eq:snr_db}. 
The product transform retains its advantage over the single--basis transforms
across the full range, attaining the lower AMSE at every level; the
MSE-versus-SNR curves and the underlying values are reported in
Appendix~\ref{sec:D}.

Table~\ref{tab:sparsity_metrics} expresses the sparsity comparison of
Fig.~\ref{fig:fig1} numerically, using the metrics of
Section~\ref{sec:sec3.1} and reporting each under two conditions. The noisy
columns are computed on $\bm W(\uw s + \uw\varepsilon)$ at $\mathrm{SNR}_{\mathrm{in}}=5$~dB,
averaged over the same $200$ realizations used for Fig.~\ref{fig:fig1}; these are
the coefficients the threshold acts on. The clean columns are computed on
$\bm\theta = \bm W \uw s$, which is what enters the oracle risk of
Corollary~\ref{cor:oracle}. In every case the product transform gives the larger
value, confirming that a Lorenz curve lying farther below the diagonal
corresponds to a strictly sparser representation as measured by $C_k$.

The gap is widest at small $k$, and it is at small $k$ that the comparison is
informative.  By Lemma~\ref{lem:noise} the transformed noise is i.i.d.\
$\mathcal{N}(0,\sigma^{2})$ in every orthonormal basis: a transform may
concentrate the signal, but it cannot concentrate the noise. As $k$ increases, a
growing share of the energy counted by $C_k$ comes from coefficients that are
predominantly noise, and since that contribution is identical across bases, the
metrics of different transforms necessarily converge. Thresholding at
$\sigma\sqrt{2\log N}$ retains only coefficients above the noise level, so the
reconstruction error is governed by the largest coefficients.

\begin{table}[!t]
\centering
\caption{Sparsity metrics for the four Donoho--Johnstone benchmark signals
($N=1024$). $C_{k\%}$ is the fraction of coefficient energy carried by the
largest $k\%$ of coefficients, as defined in Section~\ref{sec:sec3.1}. Left
block: noisy coefficients at $\mathrm{SNR}_{\mathrm{in}}=5$~dB under the power
definition~\eqref{eq:snr_db}, averaged over $200$ realizations and matching the
conditions of Fig.~\ref{fig:fig1}.  Right block: clean-signal coefficients $\bm W \uw s$, the quantity entering the oracle risk of Corollary~\ref{cor:oracle}; these are scale
invariant and so do not depend on the noise level. The largest value in each column is in bold.}
\label{tab:sparsity_metrics}
\resizebox{\columnwidth}{!}{%
\begin{tabular}{ll S[table-format=1.4] S[table-format=1.4] S[table-format=1.4]
                   S[table-format=1.4] S[table-format=1.4] S[table-format=1.4]}
\toprule
& & \multicolumn{3}{c}{\textbf{Noisy, $5$~dB}} & \multicolumn{3}{c}{\textbf{Clean}} \\
\cmidrule(lr){3-5}\cmidrule(lr){6-8}
\textbf{Signal} & \textbf{Transform}
  & {$C_{1\%}$} & {$C_{5\%}$} & {$C_{10\%}$}
  & {$C_{1\%}$} & {$C_{5\%}$} & {$C_{10\%}$} \\
\midrule
\multirow{3}{*}{Doppler}
 & $\bm W_1$ (Daub5)          & 0.1829 & 0.6150 & 0.7888 & 0.2108 & 0.7690 & 0.9886 \\
 & $\bm W_2$ (Symm4)          & 0.1830 & 0.6144 & 0.7878 & 0.2109 & 0.7671 & 0.9853 \\
 & $\bm W_1 \bm W_2$          & \bfseries 0.6999 & \bfseries 0.8005 & \bfseries 0.8481
                              & \bfseries 0.9177 & \bfseries 0.9987 & \bfseries 0.9999 \\
\midrule
\multirow{2}{*}{Blocks}
 & $\bm W_1 = \bm W_2$ (Haar) & 0.2845 & 0.7129 & 0.8074 & 0.3475 & 0.9081 & 0.9942 \\
 & $\bm W_1 \bm W_2$          & \bfseries 0.6444 & \bfseries 0.7827 & \bfseries 0.8361
                              & \bfseries 0.8412 & \bfseries 0.9835 & \bfseries 0.9981 \\
\midrule
\multirow{3}{*}{Bumps}
 & $\bm W_1$ (Haar)           & 0.3607 & 0.7133 & 0.8012 & 0.4672 & 0.9074 & 0.9800 \\
 & $\bm W_2$ (Daub3)          & 0.3567 & 0.7249 & 0.8075 & 0.4571 & 0.9238 & 0.9871 \\
 & $\bm W_1 \bm W_2$          & \bfseries 0.4657 & \bfseries 0.7497 & \bfseries 0.8191
                              & \bfseries 0.6088 & \bfseries 0.9537 & \bfseries 0.9935 \\
\midrule
\multirow{2}{*}{HeaviSine}
 & $\bm W_1 = \bm W_2$ (Symm4)& 0.2282 & 0.6530 & 0.7951 & 0.2803 & 0.8252 & 0.9921 \\
 & $\bm W_1 \bm W_2$          & \bfseries 0.7250 & \bfseries 0.8104 & \bfseries 0.8543
                              & \bfseries 0.9489 & \bfseries 0.9998 & \bfseries 1.0000 \\
\bottomrule
\end{tabular}%
}
\end{table}
\medskip
\noindent
{\bf Complex-valued filters within product transforms.}
Building on Section~\ref{sec:3.1.1}, we next incorporate complex-valued filters in product transforms and
evaluate performance in a unitary setting using the Doppler signal.

\noindent
For this signal, we choose $\bm W_1$ to be the $L$--level unitary transform matrix generated by the
complex-valued Daubechies~6 filter, and $\bm W_2$ to be the $L$--level unitary transform matrix generated by
the (real) Haar filter. We compare three composite wavelet--like matrices: the product matrix
$\bm W_1 \bm W_2$, the similarity transform $\bm W_1^{\dagger}\bm W_2 \bm W_1$, and the Kronecker product
$\bm W_k \otimes \bm W_l$. Here, $\bm W_k$ and $\bm W_l$ are
unitary transforms generated by complex-valued Daubechies~6 and real Haar, respectively. The results and details of the experiment are shown in Fig.~\ref{fig:fig3} and Table~\ref{tab:table1}. Over $200$
simulations  and $\mathrm{SNR}_{\mathrm{in}}=5$~dB, the Kronecker and product transforms induce the lowest AMSE (and smallest variance),
with the similarity transform  performing better than one of the two single-basis transforms.

\begin{table}[!t]
\centering
\caption{Comparison of average MSE and variance of MSE over $200$ simulations of the Doppler signal
with $\mathrm{SNR}_{\mathrm{in}}=5$~dB. Here $\bm W_1$ is generated using complex Daubechies~6 filter, and $\bm W_2$ by
Haar. Kronecker product is between a complex-valued DAUB~6 wavelet transform $\bm W_k$ of size $2^7$ and a real
Haar transform $\bm W_l$ of size $2^3$, producing a $1024 \times 1024$ unitary transformation matrix.}
\begin{tabular}{lcc}
\hline
\textbf{Method} & \textbf{Average MSE} & \textbf{Variance of MSE} \\
\hline
Kronecker Product   & \textbf{0.1800} &  \textbf{0.0004}   \\
$\bm W_1 ^\dagger \bm W_2 \bm W_1$ Similarity &  0.3489 & 0.0017  \\
$\bm W_1 \bm W_2$ Product        & \textbf{0.1846}  & \textbf{0.0002} \\
$\bm W_1$ (complex D6)     & 0.3377 & 0.0016 \\
$\bm W_2$ (Haar)           & 0.4354  & 0.0017 \\
\hline
\end{tabular}
\label{tab:table1}
\end{table}

\begin{figure}[!t]
  \centering
  \includegraphics[width=0.7\columnwidth]{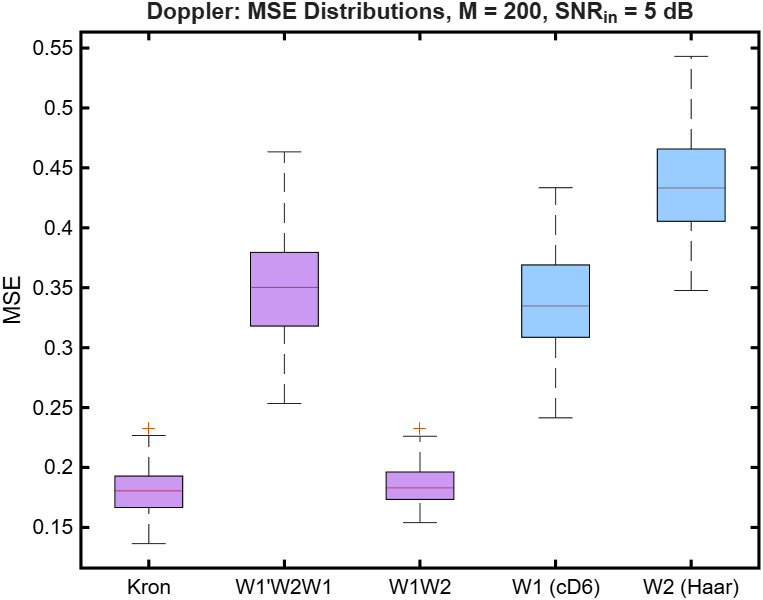}
  \caption{Comparison of Kronecker, similarity and product transforms against single--basis transforms for the Doppler signal ($N=1024$)  at $\mathrm{SNR}_{\mathrm{in}}=5$~dB; $\bm W_1 = $ complex-valued DAUB 6, $\bm W_2 = $ real Haar. After calculating MSE over $200$ simulations, the product transform  $\bm W_1 \bm W_2$ and Kronecker $\bm W_k \otimes \bm W_l$ produce the lowest average MSE and variance.
  These composite matrices have superior performance to the single--basis methods. Table~\ref{tab:table1} contains the exact values.}
  \label{fig:fig3}
\end{figure}

\medskip
\noindent
{\bf Diagonal block (adaptive) transforms.}
We also consider composite transforms constructed via diagonal block matrices as in
Section~\ref{sec:sec3.3}. We begin by forming a test signal by combining Doppler, Blocks, HeaviSine, and Bumps functions with
weights $\{1,0.2,0.1,0.2\}$ to introduce distinct structural features across intervals:

\[
s_{\text{test}} =
\big[s_{\text{Doppler}},\; 0.2\,s_{\text{Blocks}},\; 0.1\,s_{\text{HeaviSine}},\; 0.2\,s_{\text{Bumps}}\big].
\]
For each of the four Donoho--Johnstone test signals, there is a corresponding wavelet most used in the published literature: Symmlet~4 for Doppler, Haar for Blocks, Daubechies~4 (8 tap coefficients) for HeaviSine, and Daubechies~3 (6 tap coefficients) for Bumps~\citep{DonohoJohnstone1994}.
Our adaptive wavelet matrix is constructed as
\begin{eqnarray*}
\bm W = \mbox{diag}(\bm W_1,\bm W_2,\bm W_3,\bm W_4),
\end{eqnarray*}
with one block per segment of the signal, respectively.
This structure is no longer necessarily a classical wavelet matrix, yet orthogonality is preserved.

As with product and Kronecker matrices, for block diagonal  the same pipeline as above is applied ($\mathrm{SNR}_{\mathrm{in}}=5$~dB, hard thresholding at $\sigma\sqrt{2\log N}$, inverse by $\bm W^\top$); the MSE summaries are in Fig.~\ref{fig:fig4}  and Table~\ref{tab:table2}.

\begin{figure}[!t]
  \centering
  \includegraphics[width=0.7\columnwidth]{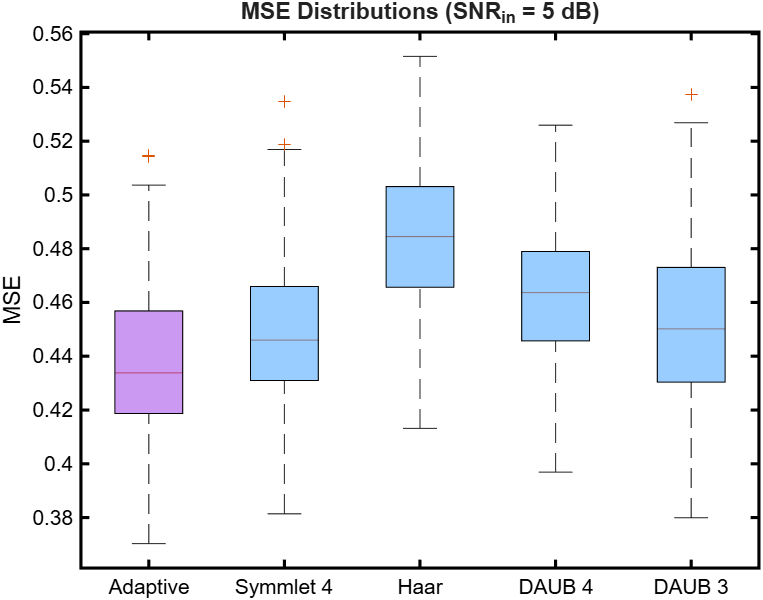}
  \caption{Side--by--side boxplots displaying distributions of MSEs ($200$ simulations)   for a combined signal of four Donoho--Johnstone signals,  at $\mathrm{SNR}_{\mathrm{in}}=5$~dB. The red bar corresponds to the median MSE value, which due to symmetry is close to the mean. Applying the adaptive wavelet matrix results in a lower average MSE than single--basis transforms (Table~\ref{tab:table2}).}
  \label{fig:fig4}
\end{figure}

\begin{table}[!t]
\centering
\caption{Comparison of adaptive diagonal block wavelet matrix against single--basis matrices on a
signal composed of a linear combination of four Donoho--Johnstone signals. Displayed are the average
and variance of MSE over $200$ simulations ( $\mathrm{SNR}_{\mathrm{in}}=5$~dB).}
\begin{tabular}{lcc}
\hline
\textbf{Method}    & \textbf{Average MSE} & \textbf{Variance} \\
\hline
Adaptive           & \textbf{0.4380}      & 0.0008 \\
Symmlet 4          & 0.4493               & 0.0007 \\
Haar               & 0.4845               & 0.0007 \\
Daubechies 4       & 0.4615               & 0.0007 \\
Daubechies 3       & 0.4515               & 0.0008 \\
\hline
\end{tabular}
\label{tab:table2}
\end{table}

\medskip
\noindent

\textbf{Choice of wavelet filters and composition order.}
Single--basis transforms $\bm W_1$ and $\bm W_2$ are selected following standard practice in the
wavelet denoising literature, where specific bases are known to provide effective representations
for particular signal features~\citep{DonohoJohnstone1994}. 
Composite product transforms $\bm W_1 \bm W_2$ are formed using these
fixed filter pairs\footnote{This refers to the primary simulation cases; Appendix~\ref{sec:D} separately reports an exhaustive sweep over all 64 ordered filter pairs.}. Empirical checks confirmed that reversing
the order of composition ($\bm W_2 \bm W_1$) consistently resulted in slightly inferior performance, and therefore such cases are omitted.

\medskip
\noindent

\textbf{On the similarity transform.}
In several settings, the similarity transform $\bm W_1^{\top} \bm W_2 \bm W_1$ yields noticeably
larger MSE values than the remaining transforms, which compresses the scale of boxplots and obscures
comparative differences. For clarity of presentation, the similarity transform is omitted from
selected Lorenz curve and boxplot displays, and is reported explicitly only in cases where its
performance is somewhat competitive.

\medskip
\noindent
\textbf{Remark}: Throughout Sections~\ref{sec:sec3} and~\ref{sec:sec4}, universal thresholding is taken as a baseline to highlight the effectiveness of composite wavelet-like transformations compared to single-basis transforms. Alternative shrinkage methods, such as Bayesian shrinkage via global-parameter BAMS \citep{VidakovicRuggeri2001}
and ABE thresholding \citep{figueiredo2001wavelet},
produce results consistent with those obtained under universal thresholding, and can be found in the supplementary materials
\footnote{\url{https://github.com/rrkulkarni108/CompositeWavelets}}. We note further that the universal threshold is in fact the rule that adaptive procedures themselves select in this regime: the hybrid SureShrink scheme of~\citet{DonohoJohnstone1995} reverts to $\sigma\sqrt{2\log N}$ at resolution levels judged sparse, applying SURE only where coefficients are sufficiently dense.

\section{Applications}
\label{sec:V}
% Bridge from simulations to data
\noindent
We now apply the composite transform framework developed in Section~\ref{sec:sec3} to real data. Having established its performance advantages in the simulation study of Section~\ref{sec:sec4}, we examine its behavior in practical settings, where interpretability and structural preservation are critical.

\subsection{Image Denoising: \textit{Barbara} Test Image}
\label{sec:sec5A}

Wavelet image denoising is a classical setting in which single--basis wavelets often perform extremely well. 
When an appropriate basis and decomposition depth are selected, and thresholding is applied properly,
reconstruction can be remarkably close to the original image. This strength, however, has a natural
limitation: single--basis transforms can struggle to preserve intricate and highly anisotropic features.

The Barbara\footnote{``Barbara'' image was shared by Professor Allen Gersho from USC. First appearance of
Barbara image in the signal and image processing literature was in the early 1990s.} image is a widely
used benchmark because it combines multiple types of structure in one photograph. Smooth regions of the
face and background test whether a method denoises without introducing artifacts, while sharp contours
along the jawline, fingers, and clothing expose any tendency to blur edges. Most importantly, the
directional textures in the striped scarf and patterned cloth provide a demanding test of a transform's
ability to preserve fine oscillatory detail. These textures are highly anisotropic and rich in
frequency content, and they are precisely the kinds of patterns that many single--basis wavelets
oversmooth. For this reason, Barbara has become a canonical example: a method that performs well on
this image is likely to balance smoothness, edge fidelity, and texture preservation in a realistic and
visually interpretable setting.

To explore this issue, we adopt the same methodology used in earlier experiments (Section~\ref{sec:sec4}). The decomposition
level, type of thresholding, and the universal threshold remain unchanged. Under these identical
conditions, we compare single--basis wavelets with the product transform formed from $\bm W_1$ and
$\bm W_2$. The comparison is instructive: the composite transform, by combining two orthogonal
structures, produces coefficients that adapt more flexibly to local texture. As a result, the
striping in the fabric of Barbara's scarf and trousers is preserved more faithfully. The single--basis transforms
tend to oversmooth these regions, while the product transform maintains both fine oscillatory structure
and global coherence.

We perform a grid search over a library of standard orthonormal wavelet filters (Haar, Daubechies,
Symmlets, and Coiflets) to determine which pairing and ordering $\bm W_1 \circ \bm W_2$ yields the
largest improvement of the product transform over its single--basis components. For each candidate
pair, we estimate the AMSE over $M=200$ Monte Carlo replicates at increasing noise levels $\sigma=20,50,100$,
evaluating performance on both the full Barbara image  (Table~\ref{tab:table3}) and the highly textured scarf region (Table~\ref{tab:table4}). This
search consistently identifies Symm4~$\circ$~Coif3 as the strongest combination, offering the greatest
reduction in MSE and the best preservation of scarf texture relative to either filter alone (Fig.~\ref{fig:fig5}), motivating its use in the remainder of the analysis.

With the same threshold and parallel processing steps, the product transform yields a noticeably
sharper reconstruction of the complex textures. The improvement is visually evident in
Fig.~\ref{fig:fig6} and represents a clear advantage of the composite approach over individual
wavelet bases when preservation of intricate local detail is essential.

\begin{table}[!t]
\centering
\resizebox{\columnwidth}{!}{%
\begin{tabular}{llS[table-format=4.4]S[table-format=3.4]S[table-format=2.2]S[table-format=2.2]S[table-format=1.4]}
\toprule
\textbf{Noise Level} & \textbf{Method} & {\textbf{Average MSE}} & {\textbf{Var.\ of MSE}} & {\textbf{SNR}$_{\mathrm{out}}$ \textbf{(dB)}} & {\textbf{PSNR (dB)}} & {\textbf{SSIM}} \\
\midrule
\multirow{3}{*}{$\sigma = 20$}
& $\bm W_1 \bm W_2$ Product & \bfseries 269.5276 & \bfseries 1.1082 & \bfseries 18.96 & \bfseries 23.83 & \bfseries 0.5003 \\
& $\bm W_1$ (Symm4)   & 338.6568 & 1.6706 & 17.97 & 22.83 & 0.4837 \\
& $\bm W_2$ (Coif3)   & 324.3655 & 1.4576 & 18.16 & 23.02 & 0.4931 \\
\midrule
\multirow{3}{*}{$\sigma = 50$}
& $\bm W_1 \bm W_2$ Product & \bfseries 659.3157 & \bfseries 14.5527 & \bfseries 15.07 & \bfseries 19.94 & \bfseries 0.2967 \\
& $\bm W_1$ (Symm4)   & 915.1187 & 21.4137 & 13.65 & 18.51 & 0.2606 \\
& $\bm W_2$ (Coif3)   & 904.0081 & 24.0717 & 13.70 & 18.57 & 0.2642 \\
\midrule
\multirow{3}{*}{$\sigma = 100$}
& $\bm W_1 \bm W_2$ Product & \bfseries 1091.6725 & \bfseries 72.6674 & \bfseries 12.89 & \bfseries 17.75 & \bfseries 0.1797 \\
& $\bm W_1$ (Symm4)   & 3821.6533 & 996.3215 & 7.44 & 12.31 & 0.1152 \\
& $\bm W_2$ (Coif3)   & 3803.8689 & 997.1222 & 7.46 & 12.32 & 0.1160 \\
\bottomrule
\end{tabular}%
}
\caption{Denoising performance on the full Barbara $512 \times 512$ image for increasing noise
levels $\sigma$. The single--basis transforms $\bm W_1=$ Symm4 and $\bm W_2=$ Coif3 are the classical discrete wavelet transform baseline. The product transform $\bm W_1 \bm W_2$ attains smaller average MSE and variance of MSE, and higher {SNR}$_{\mathrm{out}}$ dB, PSNR, and SSIM than the baseline across the $200$ simulations, with the disparity between the two increasing with noise.}
\label{tab:table3}
\end{table}

\begin{figure}[!t]
\centering
\begin{tabular}{cc}
    \includegraphics[width=0.45\linewidth]{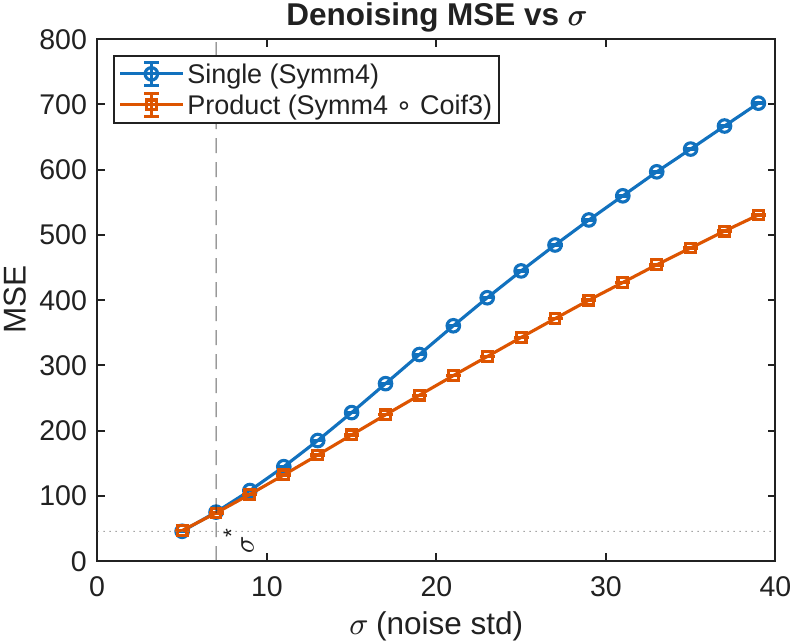} &
    \includegraphics[width=0.45\linewidth]{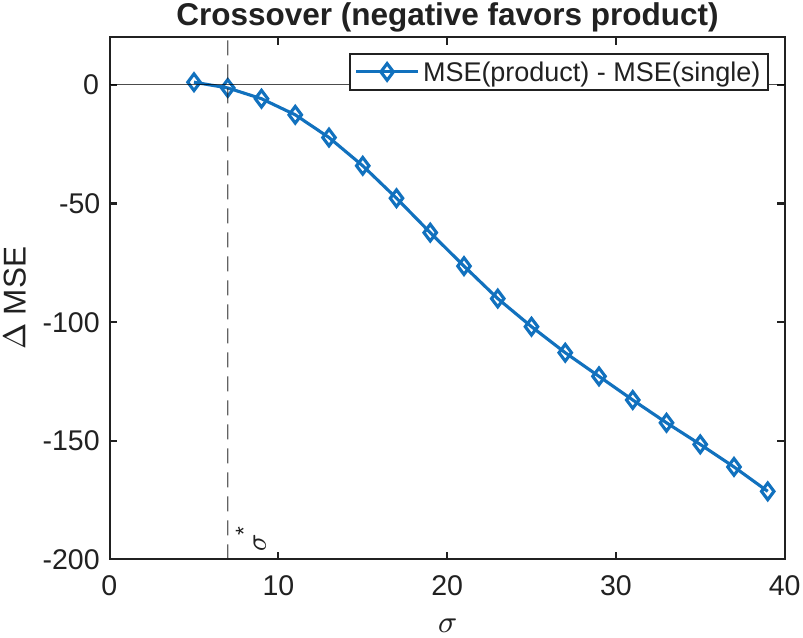} \\
    (a) MSE vs.\ $\sigma$ & (b) $\Delta$MSE vs.\ $\sigma$
\end{tabular}
\caption{Comparison of single--basis transform (Symm4) and product transform (Symm4~$\circ$~Coif3) denoising across
noise levels $\sigma$. Negative $\mathnormal{\Delta}$MSE favors the product transform.}
\label{fig:fig5}
\end{figure}

\begin{table}[!t]
\centering
\resizebox{\columnwidth}{!}{%
\begin{tabular}{lccccc}
\hline
\textbf{Method} & \textbf{Average MSE} & \textbf{Variance of MSE} & {\textbf{SNR}$_{\mathrm{out}}$ \textbf{(dB)}} & \textbf{PSNR (dB)} & \textbf{SSIM} \\
\hline
$\bm W_1 \bm W_2$ Product (Symm4, Coif3)   & \textbf{410.7178} & \textbf{22.3236} & \textbf{16.22} & \textbf{22.00} & \textbf{0.6659} \\
$\bm W_1$ (Symm4)     & 606.0803 & 57.1917 & 14.53 & 20.31 & 0.5859 \\
$\bm W_2$ (Coif3)     & 571.1162 & 45.5754 & 14.78 & 20.56 & 0.6017 \\
\hline
\end{tabular}%
}
\caption{Denoising performance on the Barbara scarf patch across product and
single--basis transforms, averaged over $200$ simulations ($\sigma = 20$). The product transform $\bm W_1\bm W_2$ gives the
lowest MSE and the highest {SNR}$_{\mathrm{out}}$ dB, PSNR, and SSIM, showing that it preserves more
of the scarf texture than either single--basis transform under the identical thresholding
rule.}
\label{tab:table4}
\end{table}

\begin{figure}[!t]
\centering
\begin{tabular}{cc}
    \includegraphics[width=0.25\linewidth]{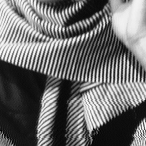} &
    \includegraphics[width=0.25\linewidth]{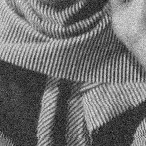} \\[0.7em]
  {\footnotesize  (a) Original scarf patch} &  {\footnotesize (b) Noisy scarf patch} \\[1.2em]
    \includegraphics[width=0.25\linewidth]{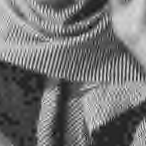} &
    \includegraphics[width=0.25\linewidth]{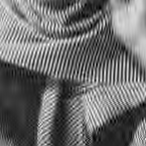} \\[0.7em]
  {\footnotesize  (c) Denoised (Single-basis transform) } & {\footnotesize (d) Denoised (Product transform $\bm W_1 \bm W_2$)}
\end{tabular}
\caption{Comparison of original, noisy, and denoised scarf region of the Barbara image ($\sigma = 20$). Denoising is
performed using hard thresholding with universal threshold. Note that the stripes are more defined and
clearer in (d), with the bottom edge of the scarf being more distinguishable than in (c).}
\label{fig:fig6}
\end{figure}

\subsection{Energy Concentration in Turbulence Signals and Townsend--Type Partitioning}
\label{sec:sec5b}

Time series measurements of wind velocity and temperature $T$ were collected over a grass--covered
forest clearing at Duke Forest near Durham, North Carolina. The components of wind speed were measured
using a Gill triaxial sonic anemometer at the sampling frequency $f_s = 56\ \mbox{Hz}$.

Here we consider the vertical velocity component of length $N=2048$ and show that product wavelet
transforms compress data better than individual orthogonal wavelet transforms (see Table~\ref{tab:table5}, Figure~\ref{fig:fig7} and Appendix~\ref{sec:B}). Atmospheric turbulent
flows are highly intermittent and, under the Kolmogorov K41 law, have Hurst exponent $1/3$. This
suggests that more complex atomic functions in the composite framelet decomposition can be more
economical in representing signal energy in such time series.

\begin{table}[!t]
\centering
\begin{tabular}{c|ccc}
\empty & Total energy & Top 1\% coefs & Top 5\% coefs \\
\hline
$\bm W_1$          & 1146.6438 & 0.2374 & 0.6419 \\
$\bm W_2$          & 1146.6438 & 0.2410 & 0.6386 \\
$\bm W_1 \bm W_2$  & 1146.6438 & \textbf{0.4558} & \textbf{0.8372} \\
$\bm W_2 \bm W_1$  & 1146.6438 & \textbf{0.4630} & \textbf{0.8338}
\end{tabular}
\caption{Both product transforms yield sparser representations by concentrating signal energy into fewer coefficients.}
\label{tab:table5}
\end{table}

\begin{figure}[!t]
\centering
\includegraphics[width=0.70\linewidth]{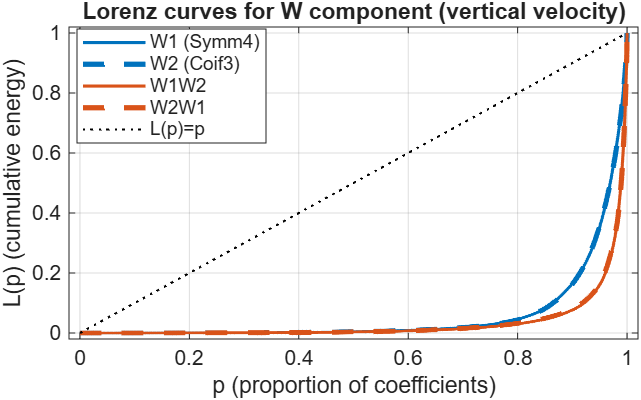}
\caption{The two single--basis transforms of Symmlet~4 and Coiflet~3 produce nearly identical Lorenz curves, as do
the two product transforms $\bm W_1 \bm W_2$ and $\bm W_2 \bm W_1$. The product transforms clearly dominate
the single--basis transforms in terms of energy concentration.}
\label{fig:fig7}
\end{figure}

As observed in \citet{KatulVidakovic1996}, wavelet--based representations, and especially those that
improve sparsity relative to classical orthogonal wavelets, provide sharper discrimination between
coherent eddy signatures and background fluctuations. Enhanced compressibility therefore strengthens
the efficacy of Townsend--style~\citep{Townsend1976} decompositions by allowing the attached--eddy
contribution to be isolated with fewer and more interpretable coefficients, while noise--like detached
structures are more effectively suppressed through thresholding.

\section{Discussion}
\label{sec:VI}
From an engineering standpoint, the central message of this work is that denoising performance can be substantially
improved without sacrificing the properties that make wavelets attractive in practice.
All composite constructions studied here preserve orthogonality (or unitarity), and therefore retain
energy preservation, numerical stability, and perfect reconstruction. This ensures that the standard
transform–shrink–inverse pipeline remains valid, predictable, and easy to deploy, even when the
transform no longer corresponds to a classical two–channel wavelet filterbank.

The empirical results show that relaxing the strict requirement of a single filterbank representation
opens a richer design space for multiscale transforms. Composite wavelet--like matrices, especially
products and Kronecker products, often induce stronger concentration of signal energy into a
smaller set of coefficients. This effect is clearly reflected in the Lorenz curve diagnostics and
translates directly into improved denoising under identical thresholding rules. Across benchmark
signals, real data, and image examples, the product transform $\bm W_1 \bm W_2$ repeatedly outperforms
single--basis wavelets in terms of MSE, while requiring no additional reconstruction machinery.

An important practical lesson is that orthogonality alone is not sufficient to guarantee good
denoising performance. Similarity transforms preserve energy and invertibility, yet their tendency to
mix scales and locations weakens sparsity and reduces effectiveness under thresholding. In contrast,
products and Kronecker constructions maintain a meaningful multiscale organization of coefficients,
which is essential for separating signal from noise. Block--diagonal composites further demonstrate
that adaptivity to heterogeneous signal structure can be introduced in a controlled and interpretable
way, again without sacrificing orthogonality.

Taken together, these findings suggest a shift in perspective that is useful for engineering
applications. Rather than viewing wavelet transforms solely through the lens of classical filterbank
theory, it is advantageous to regard them as orthogonal operators that can be combined algebraically
to better match signal structure. Lorenz curves and MSE summaries provide simple, interpretable tools
for assessing whether a given composite transform improves sparsity and denoising performance.

Looking ahead, several directions are particularly relevant for practice. More advanced shrinkage
rules, including block, Bayesian, or empirical Bayes methods, are likely to benefit even more from
the heavier--tailed coefficient distributions produced by composite transforms. On the computational
side, efficient implementations of products and Kronecker constructions, especially on GPUs or
tensor--based architectures, merit further study. These algebraic constructions could also be connected with recent learning-based approaches, such as wavelet-inspired invertible networks~\citep{HuangDragotti2022}, which move beyond a single fixed basis through learned nonlinear transforms; combining their adaptivity with the interpretability of linear orthogonal operators is a natural avenue for future work. Finally, the unitary structure of these composites
naturally aligns with tensor and quantum computing frameworks, suggesting broader applicability
beyond classical signal denoising. 

In summary, composite wavelet--like matrices preserve the stability guarantees engineers rely on,
while offering increased flexibility, stronger sparsity, and improved denoising performance. They
constitute a practical and effective extension of classical wavelet methods for modern signal and
data analysis tasks.

\section*{Acknowledgment}

Work of Radhika Kulkarni was supported by the H. O. Hartley Chair Foundation at Texas A\&M University. Vidakovic acknowledges NSF Award 2515246 at the same institution. The authors thank the Editor and three anonymous referees, whose comments greatly improved the presentation.

\appendix

\section{Product Transforms: Proof of Non-Wavelet Structure and the Induced Atoms}
\label{sec:A}
\label{sec:appendix}

This appendix explains why the product
\begin{eqnarray}
\bm W &=& \bm W_1 \bm W_2
\label{eq:app-product-transform}
\end{eqnarray}
of two orthogonal wavelet matrices is, in general, not itself the matrix of a single classical two-channel wavelet filterbank, even though it is orthogonal and perfectly reconstructing.

Let $\bm W_1$ and $\bm W_2$ be orthogonal wavelet transform matrices of size $N \times N$, generated by compactly supported orthonormal two-channel filterbanks. Denote their low-pass analysis filters by $h_1$ and $h_2$, and their high-pass filters by $g_1$ and $g_2$. Then
\begin{eqnarray}
\bm W^\top \bm W
&=&
(\bm W_1 \bm W_2)^\top (\bm W_1 \bm W_2) \nonumber\\
&=&
\bm W_2^\top \bm W_1^\top \bm W_1 \bm W_2 \nonumber\\
&=&
\bm W_2^\top \bm W_2 \nonumber\\
&=&
I.
\label{eq:app-product-orthogonality}
\end{eqnarray}
Thus $\bm W$ is an orthogonal transform and the inverse transform is $\bm W^\top$. The separate question is whether $\bm W$ can be represented as a single classical two-channel discrete wavelet transform generated by some analysis filters $(h_A,g_A)$.

A natural attempt to collapse the product into a single two-channel filterbank is to combine the two low-pass filters as
\begin{eqnarray}
h_A &=& h_1 * h_2^{\uparrow 2},
\label{eq:app-collapsed-lowpass}
\end{eqnarray}
where $h_2^{\uparrow 2}$ denotes the filter obtained by inserting one zero between consecutive taps of $h_2$. If $H_1(z)$ and $H_2(z)$ are the corresponding $z$-transforms, then the proposed collapsed low-pass filter has
\begin{eqnarray}
H_A(z) &=& H_1(z) H_2(z^2).
\label{eq:app-HA-z}
\end{eqnarray}

For an orthonormal two-channel filterbank, the low-pass filter must satisfy the half-band, or quadrature mirror, condition
\begin{eqnarray}
\Delta_A(\omega) &=& C_A,
\label{eq:app-halfband-condition}
\end{eqnarray}
where $C_A$ is a constant independent of $\omega$, and
\begin{eqnarray}
\Delta_A(\omega)
&=&
|H_A(e^{i\omega})|^2
+
|H_A(-e^{i\omega})|^2 .
\label{eq:app-DeltaA}
\end{eqnarray}
The precise value of $C_A$ depends on the normalization convention; the essential requirement is that $\Delta_A(\omega)$ be constant in $\omega$.

For the collapsed filter in \eqref{eq:app-HA-z},
\begin{eqnarray}
H_A(e^{i\omega})
&=&
H_1(e^{i\omega}) H_2(e^{2i\omega}),
\label{eq:app-HA-omega}
\end{eqnarray}
and
\begin{eqnarray}
H_A(-e^{i\omega})
&=&
H_A(e^{i(\omega+\pi)}) \nonumber\\
&=&
H_1(e^{i(\omega+\pi)})
     H_2(e^{2i(\omega+\pi)}) \nonumber\\
&=&
H_1(-e^{i\omega}) H_2(e^{2i\omega}),
\label{eq:app-HA-minus-omega}
\end{eqnarray}
because $e^{2i(\omega+\pi)}=e^{2i\omega}$. Substituting \eqref{eq:app-HA-omega} and \eqref{eq:app-HA-minus-omega} into \eqref{eq:app-DeltaA} gives
\begin{eqnarray}
\Delta_A(\omega)
&=&
|H_2(e^{2i\omega})|^2
\left[
|H_1(e^{i\omega})|^2
+
|H_1(-e^{i\omega})|^2
\right].
\label{eq:app-DeltaA-expanded}
\end{eqnarray}
Since $h_1$ is an orthonormal low-pass filter,
\begin{eqnarray}
|H_1(e^{i\omega})|^2
+
|H_1(-e^{i\omega})|^2
&=&
C_1,
\label{eq:app-H1-halfband}
\end{eqnarray}
where $C_1$ is constant in $\omega$. Therefore
\begin{eqnarray}
\Delta_A(\omega)
&=&
C_1 |H_2(e^{2i\omega})|^2 .
\label{eq:app-DeltaA-final}
\end{eqnarray}
For a nontrivial compactly supported orthonormal wavelet low-pass filter $h_2$, the quantity $|H_2(e^{2i\omega})|^2$ is not constant in $\omega$. A genuine low-pass wavelet filter has frequency selectivity, with large response near zero frequency and attenuation near the Nyquist frequency. Hence the candidate collapsed filter $h_A$ does not satisfy the half-band condition required of a single orthonormal two-channel wavelet filterbank, except in degenerate cases.

Even though $\bm W$ is not a classical wavelet matrix, it defines a valid orthonormal system.
The decomposing atoms (analysis functions) are obtained directly from the matrix:
for the $k$th coefficient, the associated atom is

\[
\xi_k = \bm W^\top e_k,
\]
and for any discrete signal $f\in\mathbb{R}^N$ one has the exact expansion

\[
f = \sum_{k=1}^N \langle f,\xi_k\rangle\,\xi_k .
\]
Thus the atoms are precisely the columns of $\bm W^\top$ (equivalently, the rows of $\bm W$).
In this sense, $\bm W$ generates a tight frame, in fact an orthonormal basis, but the atoms
do not arise from a single refinable scaling function or a two--channel refinement/cascade
construction. Classical refinement-based machinery presupposes a perfect reconstruction
polyphase condition; the failure of the half-band condition shown above rules that out for the composite
filter attempt, so the atoms must be read from the matrix itself.

Figure~\ref{fig:fig8} illustrates this concretely for a 
representative index $k=1000$ when both $\bm W_1$
and $\bm W_2$ are $2048\times 2048$ transforms (Coiflet 1 for $\bm W_1$ and Symmlet 4 for
$\bm W_2$). Panel (a) shows the full atom $\bm W^\top(1000,:)$,
and Panel (b) shows a zoomed segment revealing localized oscillations. The key engineering
message is that product atoms are \emph{mixtures} of two wavelet systems: they remain
energy-preserving and perfectly invertible, but their shapes are richer than classical wavelet
translates/dilates. This added flexibility is exactly what can improve sparsity and denoising
after a fixed threshold: the dictionary has more varied atom morphologies, so typical signals
can align more closely with a smaller subset of coefficients.

\begin{figure}[!t]
\centering
\begin{minipage}[t]{0.49\columnwidth}
    \centering    \includegraphics[width=\textwidth]{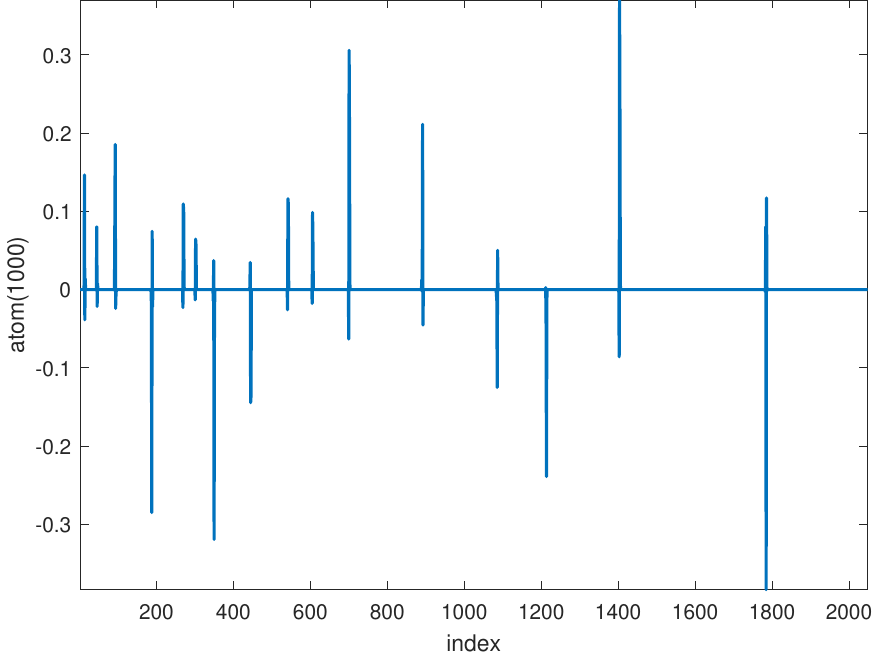}
\end{minipage}
\hfill
\begin{minipage}[t]{0.49\columnwidth}
    \centering
    \includegraphics[width=\textwidth]{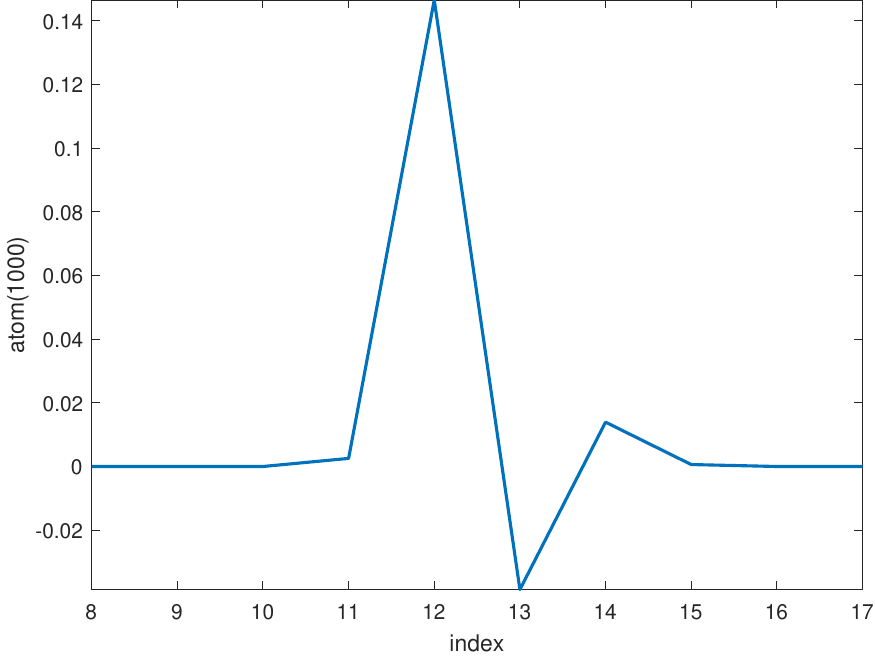}
\end{minipage}
\caption{\small (a) Composite framelet atom extracted from position $k=1000$ of $ \bm W$, generated by the composite transform $ \bm W = \bm W_{1} \bm W_{2}$.
(b) Zoomed-in region illustrating local behavior of the same atom. }
\label{fig:fig8}
\end{figure}
In summary, the product transform $\bm W_1\bm W_2$ should be viewed as an orthogonal
composite transform whose atoms are best interpreted as framelet-like basis elements defined
by the matrix product itself. They preserve perfect reconstruction by orthogonality, while
lying outside the classical two--channel wavelet filter bank class.

\section{Additional Empirical Evidence: Turbulence Components}
\label{sec:B}

We extend the discussion of turbulence signal partitioning in Section~\ref{sec:sec5b} by examining not only the effect of the product composite matrices on the vertical velocity component (W), but also on the remaining data streams.  Figure~\ref{fig:fig9} and Table~\ref{tab:energy_concentration} present the results of applying $\bm W_1 \bm W_2$ and $\bm W_2 \bm W_1$ to temperature (T) and to the streamwise (U), lateral (V) and vertical (W) velocity components both in graphical and tabular form. 

\begin{figure}[!t]
\centering

\begin{tabular}{cc}
    \includegraphics[width=0.46\linewidth]{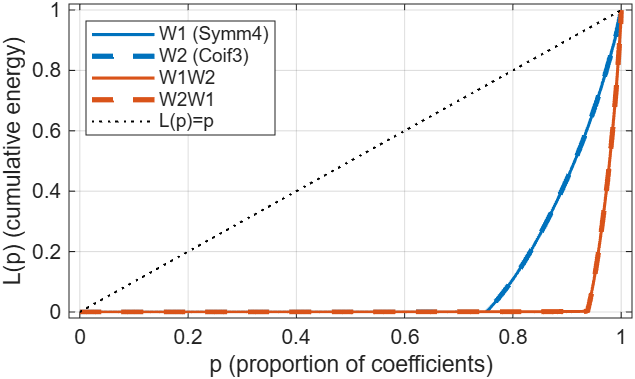} &
    \includegraphics[width=0.46\linewidth]{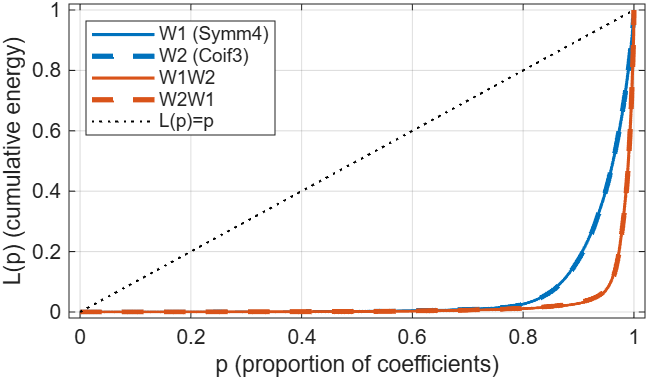} \\[0.4em]
    (a) Stream Velocity (U) & (b)  Lateral Velocity (V) \\[0.45em]

    \includegraphics[width=0.46\linewidth]{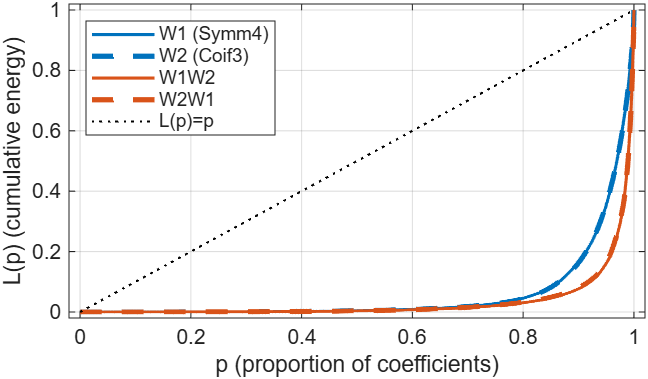} &
    \includegraphics[width=0.46\linewidth]{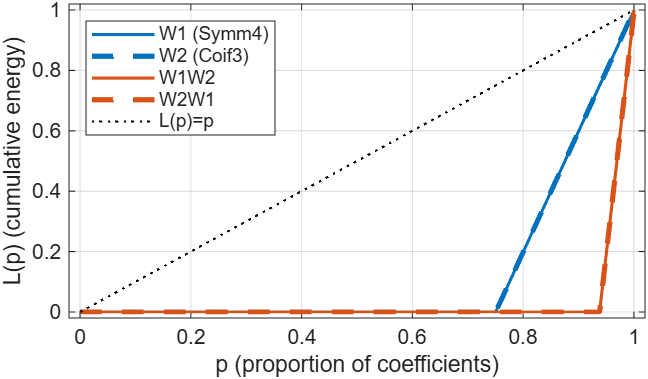} \\[0.4em]
    (c) Vertical Velocity (W) & (d)  Temperature (T)
\end{tabular}
\caption{\textbf{Lorenz curves for U, V, W, and T components of Turbulence data.} Both product transforms generate far greater disbalance of the energies of the data streams in comparison to the single--basis transforms. The same filters as Barbara example in Section~\ref{sec:sec5A} are used for continuity.}
\label{fig:fig9}
\end{figure}

\medskip
\begin{table}[!t]
\centering

\begin{tabular}{ll| S[table-format=9.4] S[table-format=1.4] S[table-format=1.4]}
\toprule
\textbf{Component} & \textbf{Transform} & \textbf{Total Energy} & \textbf{Top 1\%} & \textbf{Top 5\%} \\
\midrule
U (streamwise) 
& $\bm W_1$  & 65749.1668 & 0.0807 & 0.3251 \\
& $\bm W_2$ & 65749.1668 & 0.0805 & 0.3252 \\
& $\bm W_1 \bm W_2$      & 65749.1668 & \textbf{0.2634} & \textbf{0.8840} \\
& $\bm W_2 \bm W_1$       & 65749.1668 & \textbf{0.2635} & \textbf{0.8852} \\
\midrule
V (lateral) 
& $\bm W_1$ & 3288.3322 & 0.1867 & 0.5889 \\
& $\bm W_2$ & 3288.3322 & 0.1867 & 0.5880 \\
& $\bm W_1 \bm W_2$       & 3288.3322 & \textbf{0.4782} & \textbf{0.9248} \\
& $\bm W_2 \bm W_1$       & 3288.3322 & \textbf{0.4843} & \textbf{0.9251} \\
\midrule
W (vertical) 
& $\bm W_1$ & 1146.6438 & 0.2374 & 0.6419 \\
& $\bm W_2$ & 1146.6438 & 0.2410 & 0.6386 \\
& $\bm W_1 \bm W_2$       & 1146.6438 & \textbf{0.4558} & \textbf{0.8372} \\
& $\bm W_2 \bm W_1$       & 1146.6438 & \textbf{0.4630} & \textbf{0.8338} \\
\midrule
T (temperature) 
& $\bm W_1$ & 189165425.5900 & 0.0391 & 0.1995 \\
& $\bm W_2$ & 189165425.5914 & 0.0391 & 0.1995 \\
& $\bm W_1 \bm W_2$       & 189165425.5915 & \textbf{0.1575} & \textbf{0.7982} \\
& $\bm W_2 \bm W_1$       & 189165425.5915 & \textbf{0.1570} & \textbf{0.7977} \\
\bottomrule
\end{tabular}

\caption{Energy concentration of wavelet and product transforms ($\bm W_1$ Symmlet 4, $\bm W_2$ Coiflet 3) for the four signal components ($N=2048$ samples). Reported values show total energy and the fraction of energy captured by the top 1\% and top 5\% largest coefficients. }
\label{tab:energy_concentration}
\end{table}

The results indicate that both product transforms $\bm W_1 \bm W_2$ and $\bm W_2 \bm W_1$ induce a substantially greater degree of energy disbalance, particularly for the streamwise velocity and temperature signals. As expected, the total energy is preserved under the composite transforms due to orthogonality.

\section{Proof of Corollary 1 (Universal Threshold Validity)}
\label{sec:C}

\begin{proof}
By Lemma~\ref{lem:noise}, the coordinates
$z_i := (\bm W\uw\varepsilon)_i$ are i.i.d.\ $\mathcal{N}(0,\sigma^2)$.
Let $\varphi$ denote the standard normal density. For $t>0$ and
$Z\sim\mathcal{N}(0,1)$, since $u/t \ge 1$ on $[t,\infty)$ and
$\varphi'(u) = -u\varphi(u)$,
\[
\Pr(Z > t) = \int_t^{\infty}\!\varphi(u)\,du
\;\le\; \int_t^{\infty}\!\frac{u}{t}\,\varphi(u)\,du
= \frac{\varphi(t)}{t}.
\]
Hence for $z_i\sim\mathcal{N}(0,\sigma^2)$ (and $|z_i|$ two-sided) ,
\[
\Pr(|z_i| > t)
\;\le\; \frac{2\,\varphi(t/\sigma)}{t/\sigma}
= \sqrt{\tfrac{2}{\pi}}\;\frac{\sigma}{t}\;e^{-t^2/2\sigma^2}.
\]
Taking union bound over the $N$ coordinates gives us 
\[
\Pr\!\Big(\max_{1\le i\le N}|z_i| > t\Big)
\;\le\; N\sqrt{\tfrac{2}{\pi}}\;\frac{\sigma}{t}\;e^{-t^2/2\sigma^2}.
\]
Now we substitute the universal threshold  $t = \lambda = \sigma\sqrt{2\log N}$, which
gives $e^{-t^2/2\sigma^2} = e^{-\log N} = 1/N$ and
$\sigma/t = 1/\sqrt{2\log N}$, so
\[
\Pr\!\Big(\max_{1\le i\le N}|z_i| > \lambda\Big)
\;\le\; \sqrt{\tfrac{2}{\pi}}\;\frac{1}{\sqrt{2\log N}}
= \frac{1}{\sqrt{\pi\log N}} \;\xrightarrow[N\to\infty]{}\; 0 .
\]
Note that the bound is non-asymptotic: when $N=1024$ it equals
$1/\sqrt{\pi\log 1024}\approx 0.21$. This is a conservative upper bound on the probability that \emph{any} noise coefficient exceeds the threshold; the resulting contribution to the risk is controlled by Corollary~\ref{cor:oracle}.
\end{proof}

\section{Robustness Across Input-SNR Levels and Further Comparative Simulation}
\label{sec:D}

This appendix reports the one-dimensional robustness study referenced in
Section~\ref{sec:IV}. Using the same simulation pipeline (length $N=1024$, $M=200$ replications, universal threshold, hard
thresholding), we consider
 $\mathrm{SNR}_{\mathrm{in}}\in\{-5,0,5,8.45\}$~dB under the power definition
~\eqref{eq:snr_db}.  The
range spans strongly noise-dominated conditions to a comparatively clean regime
and includes the default level $\mathrm{SNR}_{\mathrm{in}}=5$~dB. For each Donoho--Johnstone signal, the product transform
$\bm W_1\bm W_2$ is compared with $\bm W_1$ under the same thresholding rule.

Fig.~\ref{fig:fig10} and Table~\ref{tab:snr_sweep} show lower post-denoising AMSE for the
product at every tested level and SNR, and the ratio is largest at low SNR, consistent with the oracle-risk analysis of
Section~\ref{sec:theory}. 

\begin{figure}[!t]
\centering
\includegraphics[width=0.70\linewidth]{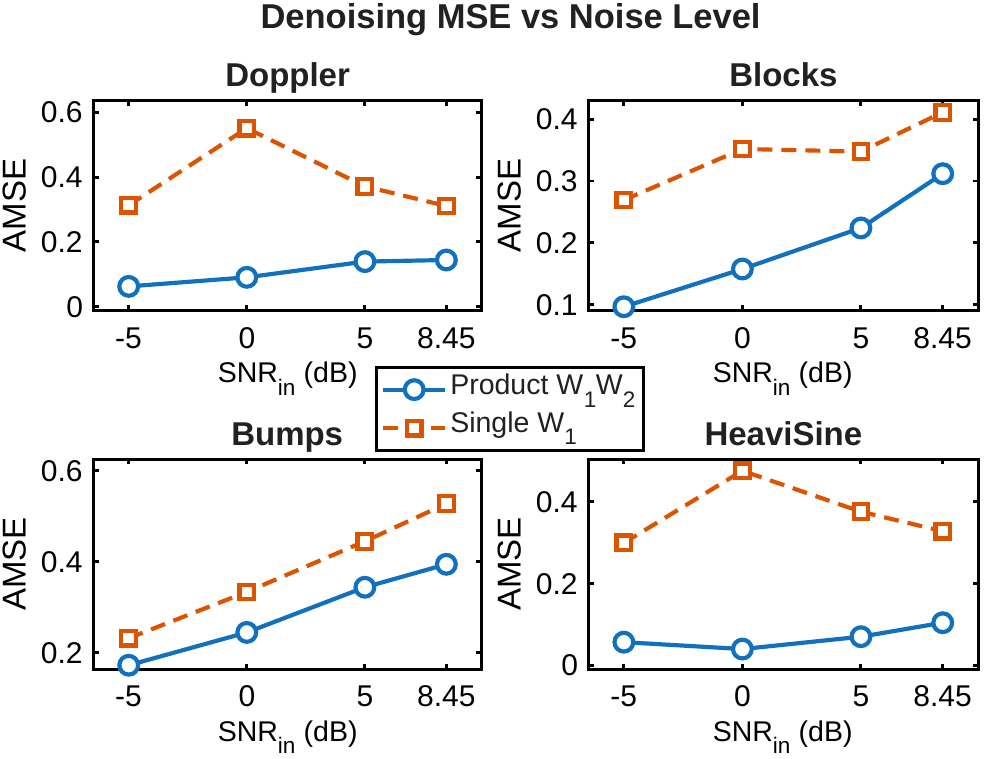}
\caption{Post-denoising AMSE versus SNR (in dB) for the four
Donoho--Johnstone signals ($N=1024$, $200$ replications). The product transform
$\bm W_1\bm W_2$ attains lower AMSE than the single--basis transform $\bm W_1$ at every
tested  input-SNR level.}
\label{fig:fig10}
\end{figure}

\begin{table}[!t]
\centering
\caption{ Post-denoising AMSE ($200$ replications, $N=1024$) for the product
transform $\bm W_1\bm W_2$ and its single--basis transform $\bm W_1$ across noise levels.
$\mathrm{SNR}_{\mathrm{in}}$ is the power ratio~\eqref{eq:snr_db}, in
decibels. The ratio row is the single-basis AMSE divided by the product AMSE;
values above one favor the product transform. For each signal the lower AMSE is
in bold.}
\label{tab:snr_sweep}
\begin{tabular}{ll S[table-format=1.4] S[table-format=1.4] S[table-format=1.4] S[table-format=1.4]}
\toprule
\textbf{Signal} & \textbf{Transform}
  & {$-5$~dB} & {$0$~dB} & {$5$~dB} & {$8.45$~dB} \\
\midrule
\multirow{3}{*}{Doppler}
 & Product $\bm W_1\bm W_2$ & \bfseries 0.0628 & \bfseries 0.0907 & \bfseries 0.1393 & \bfseries 0.1442 \\
 & Single basis $\bm W_1$   & 0.3132 & 0.5509 & 0.3715 & 0.3108 \\
 & Ratio                    & {4.99} & {6.07} & {2.67} & {2.16} \\
\midrule
\multirow{3}{*}{Blocks}
 & Product $\bm W_1\bm W_2$ & \bfseries 0.0967 & \bfseries 0.1577 & \bfseries 0.2240 & \bfseries 0.3117 \\
 & Single basis $\bm W_1$   & 0.2691 & 0.3516 & 0.3475 & 0.4110 \\
 & Ratio                    & {2.78} & {2.23} & {1.55} & {1.32} \\
\midrule
\multirow{3}{*}{Bumps}
 & Product $\bm W_1\bm W_2$ & \bfseries 0.1728 & \bfseries 0.2449 & \bfseries 0.3441 & \bfseries 0.3946 \\
 & Single basis $\bm W_1$   & 0.2322 & 0.3338 & 0.4451 & 0.5278 \\
 & Ratio                    & {1.34} & {1.36} & {1.29} & {1.34} \\
\midrule
\multirow{3}{*}{HeaviSine}
 & Product $\bm W_1\bm W_2$ & \bfseries 0.0565 & \bfseries 0.0396 & \bfseries 0.0699 & \bfseries 0.1042 \\
 & Single basis $\bm W_1$   & 0.2997 & 0.4763 & 0.3761 & 0.3283 \\
 & Ratio                    & {5.30} & {12.01} & {5.38} & {3.15} \\
\bottomrule
\end{tabular}
\end{table}

\begin{figure}[!t]
\centering
\includegraphics[width=0.90\linewidth]{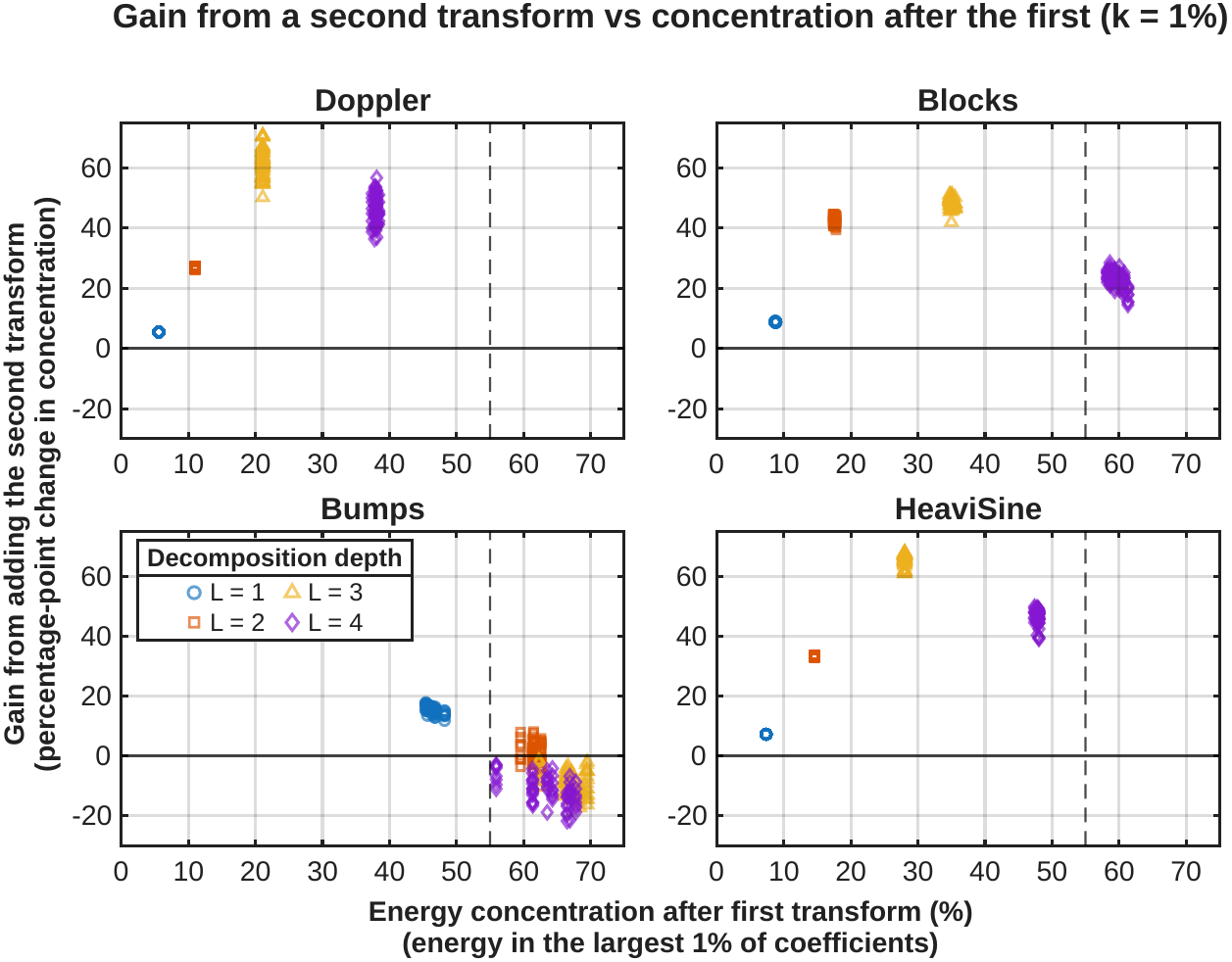}
\caption{ \textbf{When the product transform improves energy concentration.}
Gain in top-$1\%$ energy concentration from adding a second transform, $C_k(\bm W_1\bm W_2\uw s)-C_k(\bm W_2\uw s)$, versus the concentration achieved by the first transform alone, $C_k(\bm W_2\uw s)$. Each point represents one of $1024=64\times4\times4$ configurations,
combining an ordered filter pair, a decomposition depth $L=1,\ldots,4$,
and one of the four Donoho--Johnstone signals. The horizontal line marks zero gain, and the vertical dashed
line marks $\tau=0.55$, the largest inner concentration below which every configuration improves.}
\label{fig:fig11}
\end{figure}

Fig.~\ref{fig:fig11}  reports the exhaustive filter-pair sweep across a library of Haar, Daubechies 2--5, Symmlet 4 and 5, and Coiflet 1. Across all four signals, lower concentration after the inner transform $\bm W_2$ is associated with larger gains from the product transform.
At the top-$1\%$ level, every tested configuration with
$C_k(\bm W_2\uw s)<0.55$ improves, while above this threshold improvement
is no longer guaranteed. The same qualitative pattern holds at the top-$5\%$ and $10\%$ levels, with corresponding thresholds $\tau=0.93$ and $\tau=0.95$. Thus, the product is most beneficial when the inner transform leaves the signal energy relatively spread out; full numerical results are provided
in the supplementary materials on GitHub.

\section{Non-Universality of Sparsity Improvement}
\label{sec:E}

\begin{proposition}
\label{prop:nonuniversality}
Let $\bm W_1$ and $\bm W_2$ be $N\times N$ orthogonal matrices. If
\(
C_1(\bm W_1\bm W_2 y)\ \geq\ C_1(\bm W_2 y)
\text{ for every nonzero } y\in\mathbb{R}^N,
\)
then $\bm W_1$ is a signed permutation matrix, and consequently
\[
C_k(\bm W_1\bm W_2 y) = C_k(\bm W_2 y),\qquad k=1,\ldots,N,
\]
for every nonzero $y$.
\end{proposition}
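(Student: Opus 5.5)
Since $\bm W_2$ is orthogonal, hence a bijection of $\mathbb{R}^N\setminus\{0\}$, we substitute $x=\bm W_2 y$. The hypothesis then says $C_1(\bm W_1 x)\ge C_1(x)$ for every nonzero $x$. We write $C_1(d)=\|d\|_\infty^2/\|d\|_2^2$, and by Parseval $\|\bm W_1 x\|_2=\|x\|_2$. The hypothesis therefore reduces to
\[
\|\bm W_1 x\|_\infty \;\ge\; \|x\|_\infty \qquad \text{for all } x\in\mathbb{R}^N .
\]
The plan is to show that an orthogonal matrix with this property is a signed permutation.

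First we apply the inequality to $x=\bm W_1^\top u$ and use $\bm W_1\bm W_1^\top=I$. This gives $\|\bm W_1^\top u\|_\infty\le\|u\|_\infty$ for all $u$, so $\bm W_1^\top$ is a contraction in $\ell_\infty$. The induced $\ell_\infty$ operator norm is the maximal absolute row sum, so every row $r$ of $\bm W_1^\top$ satisfies $\sum_j |r_j|\le 1$. Each such row is also a unit vector, $\sum_j r_j^2=1$. We combine these with $|r_j|\le 1$:
\[
1=\sum_j r_j^2\le \sum_j |r_j| \le 1 .
\]
This forces $r_j^2=|r_j|$ for every $j$, so each entry lies in $\{-1,0,1\}$. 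A unit row then has exactly one nonzero entry $\pm1$. An orthogonal matrix whose rows are signed standard basis vectors must use distinct positions, since its rows are orthogonal, so $\bm W_1^\top$ is a signed permutation, and hence so is $\bm W_1$.

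For the consequence, a signed permutation $\bm W_1$ only permutes the coordinates of $d=\bm W_2 y$ and flips some of their signs. This leaves the multiset $\{|d_i|\}$ unchanged, so every ordered magnitude $|d|_{[j]}$ and the total energy are preserved. Hence $C_k(\bm W_1\bm W_2 y)=C_k(\bm W_2 y)$ for all $k$.

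The main obstacle is the step that turns the $C_1$ hypothesis into the $\ell_\infty$-contraction of $\bm W_1^\top$, in particular choosing the right substitutions and checking that the direction of the inequality is preserved. Once that step is in place, the row-sum argument is elementary.
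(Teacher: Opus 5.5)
Your argument is correct, but it takes a somewhat longer route than the paper.

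\textbf{The paper's route.} It tests the hypothesis only at $x=\bm W_2 y=e_j$. Because $C_1(e_j)=1$ and $C_1\le 1$ always, this gives $C_1(\bm W_1 e_j)=1$. Hence each column of $\bm W_1$ has a single nonzero entry, and orthonormality of the columns makes these entries $\pm 1$ in distinct rows.

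\textbf{Your route.} You recast the hypothesis as $\|\bm W_1 x\|_\infty\ge\|x\|_\infty$. You then dualize via $x=\bm W_1^\top u$ to make $\bm W_1^\top$ an $\ell_\infty$ contraction, and combine the maximal-row-sum formula with $\sum_j r_j^2\le\sum_j|r_j|$. The rows of $\bm W_1^\top$ you end up controlling are exactly the columns of $\bm W_1$ that the paper treats.

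\textbf{The dualization is avoidable.} Within your own framework you can skip it: putting $x=e_j$ into your $\ell_\infty$ inequality gives $\|\bm W_1 e_j\|_\infty\ge 1$ for a unit vector, which forces $\bm W_1 e_j=\pm e_i$ at once. That is the paper's proof in one line, so the step you flag as the main obstacle is not actually needed.

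\textbf{What each approach buys.} Your reformulation shows the hypothesis is exactly $\ell_\infty$ non-expansiveness of $\bm W_1^\top$. This links the proposition to the standard description of $\ell_\infty$ isometries as signed permutations. The violating signals your argument implicitly produces are those whose product coefficients $\bm W_1\bm W_2 y=u\in\{\pm1\}^N$ are flat, rather than the paper's perfectly sparse inputs $\bm W_2 y=e_j$. The paper's route is shorter and needs no operator-norm formula. It also makes the point most relevant to the surrounding discussion: uniform improvement already fails on the $N$ signals that $\bm W_2$ represents with a single coefficient.
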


\begin{proof}
Since $\bm W_2$ is onto, set $x=\bm W_2 y$ and take $x=e_j$. The assumption gives
$C_1(\bm W_1 e_j)=1$, so each column of $\bm W_1$ has exactly one nonzero entry.
Orthogonality forces these entries to have magnitude one and to occur in
distinct rows. Hence $\bm W_1$ is a signed permutation matrix, which preserves
all ordered coefficient magnitudes and therefore every $C_k$.
\end{proof}

Thus, a nontrivial product transform cannot uniformly improve concentration
for all signals, including over both of its component transforms. Any strict gain depends on the signal, the order of
composition, and the selected decomposition levels. For unitary matrices,
the corresponding conclusion is that $\bm W_1$ must be a monomial unitary
matrix.

\noindent Fig.~\ref{fig:fig11} in Appendix~\ref{sec:D} reports the corresponding empirical study of the cases in which product improves concentration, and shows that improvement is strongly related to the
concentration already achieved by the inner transform.
\newpage
\bibliography{references}

\end{document}